\newtheorem{theorem}{Theorem}
\newtheorem{lemma}{Lemma}
\newtheorem{definition}{Definition}
\begin{document}

\title{User Pairing and Power Allocation in Untrusted Multiuser NOMA for Internet-of-Things}

\author{\IEEEauthorblockN{Chaoying Yuan, Wei~Ni,~\IEEEmembership{Senior~Member,~IEEE}, Kezhong Zhang, Jingpeng Bai, Jun Shen, \\
and Abbas Jamalipour,~\IEEEmembership{Fellow,~IEEE}
\vspace{-1em}}

\thanks{C. Yuan is with the China Telecom Corporation Limited, Shanghai, China, 200122 (e-mail: yuancy3@chinatelecom.cn).

W. Ni is with the Data61, CSIRO, Marsfield, NSW 2122, Australia (e-mail: wei.ni@data61.csiro.au).

K. Zhang graduated from the Beijing University of Posts and Telecommunications (e-mail: zhangkz@foxmail.com).

J. Bai and J. Shen are with the China Telecom Corporation Limited, Guangzhou, China, 510630 (e-mail: baijp@chinatelecom.cn; shenjun6@chinatelecom.cn).

A. Jamalipour is with the School of Electrical and Information Engineering, The University of Sydney, NSW 2006, Australia (email: a.jamalipour@ieee.org).}

}

\maketitle

\begin{abstract}
In the Internet-of-Things (IoT), massive sensitive and confidential information is transmitted wirelessly, making security a serious concern. This is particularly true when technologies, such as non-orthogonal multiple access (NOMA), are used, making it possible for users to access each other's data. This paper studies secure communications in multiuser NOMA downlink systems, where each user is potentially an eavesdropper. Resource allocation is formulated to achieve the maximum sum secrecy rate, meanwhile satisfying the users' data requirements and power constraint. We solve this non-trivial, mixed-integer non-linear programming problem by decomposing it into power allocation with a closed-form solution, and user pairing obtained effectively using linear programming relaxation and barrier algorithm. These subproblems are solved iteratively until convergence, with the convergence rate rigorously analyzed. Simulations demonstrate that our approach outperforms its existing alternatives significantly in the sum secrecy rate and computational complexity.
\end{abstract}
\begin{IEEEkeywords}
Internet-of-Things (IoT), non-orthogonal multiple access (NOMA), untrusted user, user pairing, power allocation.
\end{IEEEkeywords}
\IEEEpeerreviewmaketitle

\section{Introduction}
\IEEEPARstart{T}{he}
%more and more Internet of Things (IoT) devices are connected to the wireless network, IoT has become the mainstream communication paradigm for connecting the Internet to everyday physical things~\cite{Zhang2022}. Cellular networks become the primary access network for IoT connections. Ericsson predicts that around 5.9 billion cellular IoT devices will be deployed in 2026\cite{Vaezi2022}. These devices collect data, process it, and make intelligent decisions, changing the world around us for the better. While IoT has brought great convenience to our daily lives, it has also brought some thorny challenges. Due to the extreme scarcity of wireless resources, today's networks are increasingly unable to provide large-scale connectivity and satisfactory wireless communication. On the other hand, with a large number of devices entering our daily life and the industrial Internet, ensuring the security and privacy of the IoT is also a key requirement.
increasing number of Internet-of-Things (IoT) devices connected to wireless networks has made IoT the dominant communication paradigm for connecting the physical world to the Internet~\cite{Zhang2022}.
%Cellular networks have become the primary access network for IoT connections, with
Ericsson predicted that around 5.9 billion cellular IoT devices will be deployed by 2026~\cite{Vaezi2022}. These devices collect and process data, and make intelligent decisions, improving efficiency, productivity, and convenience. However, the rapid expansion of IoT has brought challenges, including wireless resource scarcity and security and privacy concerns. It is crucial to address these issues in order to ensure satisfactory wireless communication and protect the security and privacy of IoT devices.

In the IoT scenarios, the adoption of non-orthogonal multiple access (NOMA) could potentially improve the connectivity and efficiency of massive IoT devices~\cite{Islam2017}. In contrast to orthogonal multiple access (OMA) in the time, frequency, and code domains, a NOMA transmitter can allocate different transmit powers for different receivers within the same resource block, according to the channel conditions of the receivers. Supposition coding (SC) is adopted at the transmitter. Successive interference cancellation (SIC) is deployed at the receivers. However, the broadcast nature of radio and the use of SIC at the receivers make NOMA susceptible to attacks launched by external and internal eavesdroppers~\cite{Hamamreh2019}.

To address these security concerns, physical layer security (PLS) techniques have been considered a promising approach~\cite{Hamamreh2019}. PLS
%uses the inherent properties of the physical layer of communication systems, such as the channel characteristics, to provide secure communication. Moreover, PLS schemes
can be computationally effective compared to other forms of security, such as cryptography, because they rely on simple operations that can be performed at the physical layer, such as power allocation~\cite{Wang2019}. PLS is more appropriate for low-cost IoT devices that often have limited computing resources and energy constraints~\cite{Wang2019a}. By using simple and efficient PLS techniques, IoT devices can achieve strong confidentiality of their communications without incurring high computational or energy costs~\cite{Mukherjee2015}.
Some other recent studies, e.g.,~\cite{Zhang2022,Ruby2022,Xiang2020}, also attempted to improve the secrecy performance of NOMA-based IoT systems in the presence of external eavesdropping.

It is possible for some users to eavesdrop on the signals intended for other users by executing the SIC, since users in a NOMA system share the same resource block.
%This means that at least some level of secrecy must be provided against internal eavesdropping as well.
Most studies have been under a two-user setting:
Some assumed far users untrusted~\cite{ElHalawany2018,Cao2020a}, and others assumed near users untrusted~\cite{Zhang2020,Cao2020}. Several studies~\cite{Thapar2020,Hota2022,Amin2022a} considered both users were untrusted. Different from the two-user settings in~\cite{Thapar2020,Hota2022,Amin2022a}, the
%security for untrusted multiuser scenarios is a realistic and challenging problem. The
authors of \cite{Thapar2021} proposed a decoding ordering criterion for untrusted multiuser NOMA with persistent power allocation. The authors considered all users share the same resource block, leading to fast growing interference and complexity at the receivers with the increase of users.
%In this case, the users can be potentially divided into multiple pairs, where NOMA is implemented within each pair.

In this paper, we investigate multiuser NOMA systems for IoT applications in the presence of untrusted users. User pairing and power allocation are optimized jointly to maximize the sum secrecy rate of the systems.
%However, a few challenges need to be addressed. First, one needs to properly assign the transmit power to different IoT devices in order to meet the requirements of the devices. Second, too many devices sharing the same resource block may lead to strong co-channel interference, which increases the complexity of the receiver and reduces the secrecy performance. To this end, a reasonable user pairing strategy is indispensable.}
To the best of our knowledge, user pairing and power allocation, which are critical to multiuser NOMA, have never been jointly considered in untrusted multiuser NOMA systems in the literature.

%No existing study has jointly considered user pairing and power allocation in untrusted multiuser NOMA systems.

%In this paper, we propose a lightweight, secure resource allocation among untrusted NOMA users.
The key contributions of this paper are:
\begin{itemize}
\item We study a new problem to maximize the sum secrecy rate of multiuser NOMA with  untrusted IoT users, by jointly optimizing power allocation and user pairing.

\item To effectively maximize the sum secrecy rate of all IoT devices, we adopt alternating optimization to circumvent the non-convexity of the new problem and decouple user pairing and power allocation.

\item Given user pairing, we derive analytically the optimal power allocation in closed form. Then, we develop user pairing obtained effectively using linear programming relaxation and the barrier method.

\item
Rigorous analyses are conducted for the convergence rate and complexity of our algorithm, confirming the validity of the algorithm.
\end{itemize}
Our approach addresses the challenges of interference and implementation complexity in untrusted multiuser NOMA systems, and has the potential to improve the security and performance of these systems. Extensive simulations demonstrate that the approach has superior secrecy performance, compared to existing schemes, and that joint consideration of user pairing and power allocation is critical for achieving this performance.

The remainder of the paper is arranged in the following way. In Section II,
the related works are reviewed. Section~III defines the system setting. In Section IV, the problem statement is provided, and the solution is delivered. In Section V, simulation results are analyzed to show the merits of our solution. Finally, this article is concluded in Section VI.

\textit{Notations:} Upper- and lower-case symbols stand for matrices and vectors, respectively; $^{T}$ denotes transpose; $\preccurlyeq$ stands for component-wise less than; $\cup$ and $\cap$ stand for the union and intersection operations, respectively; $\nabla$ denotes gradient. Tab. \ref{tab:notation_list} summarizes notations used in this paper.

\section{Related Work}
Most of the existing NOMA security studies have focused on external eavesdropping.
Secure transmissions in a NOMA-based IoT system were investigated in~\cite{Zhang2022}. The system offered the users different communication requirements. The authors of \cite{Ruby2022} studied the secrecy performance of cooperative NOMA-assisted IoT and derived the security outage probability under either a single- or multi-antenna setting. The authors of \cite{Xiang2020} jointly designed beamforming vector, power and subcarrier allocation to improve the worst-case sum secrecy rate in a multicarrier NOMA-assisted IoT system.

Another potential security threat in a NOMA system comes from internal users. A simple two-user setting has been actively studied.  ElHalawany \emph{et al}.~\cite{ElHalawany2018} studied the secrecy outage probability in a two-user NOMA system under the assumption that the far user was untrusted. In \cite{Cao2020a}, two optimal relay selection schemes were designed, and closed-form expressions of the secrecy outage probability was derived. Zhang \emph{et al}. \cite{Zhang2020} proposed an optimal decoding order of SIC and a jammer-aided cooperative jamming scheme for NOMA systems to defend against a stronger, near-user eavesdropper to improve the secrecy rate of the systems. In \cite{Cao2020}, a secure beamforming and power allocation strategy was designed to evaluate the secrecy outage probability of the systems in the presence of an untrusted near user.
Unlike \cite{ElHalawany2018,Cao2020a,Zhang2020,Cao2020}, the authors of \cite{Thapar2020,Hota2022,Amin2022a} treated both far and near users as the untrusted users. Specifically, the authors of \cite{Thapar2020} proposed an optimal decoding order to maximize secrecy fairness of a NOMA system. Hota \emph{et al}.~\cite{Hota2022} analyzed the ergodic rate and the ergodic secrecy rate of a two-user untrusted NOMA system with imperfect SIC. Amin \emph{et al}. \cite{Amin2022a} studied the secrecy rate maximization of the a trusted decode-and-forward relay-assisted NOMA system by optimizing power allocation. These designs provided secure communication by addressing the potential for internal eavesdropping in NOMA.

{
\begin{table}%[!htb]
	\caption{Notation list}
	\label{tab:notation_list}
\begin{center}
	\begin{tabular}{ c| l }
	 \hline
	 \multicolumn{1}{c|}
	 {\textbf{Notations}} &
	 \multicolumn{1}{c}
	 {\textbf{Descriptions}} \\
	 \hline
	 $\mathcal{D}$ & The disc-shaped area centered at the BS\\	
	 $s_k$ & The data symbol designed for user $k$ \\
	 $s$ & Transmit signal\\
	 $p_k$ &  The transmit power assigned for user $k$ \\
	 $g_k$ & The Rayleigh fading channel coefficient of user $k$ \\
	 $d_k$ & \makecell[l]{The distance between user $k$ and the BS} \\
	 $h_k$ & The channel impulse response of user $k$ \\
	 $\sigma$ & \makecell[l]{The standard deviation of the AWGN} \\
	 $\gamma_{m,n}$ & \makecell[l]{The SNR of user $m$ decoded by user $n$} \\
	 $w_{k}$ & \makecell[l]{The AWGN at user $k$} \\
	 $R^s_{n}$ & The secrecy rate of user $n$ \\
	 $R_{m, n}$ & \makecell[l]{The achievable rate of user $m$ decoded by user $n$} \\
	 $x_{m, n}$ & \makecell[l]{User pairing indicator; if user $m$ and $n$ share the same \\  resource block, $x_{m,n} = 1$. Otherwise, $x_{m, n} = 0$ }\\
	 $\mathbf{X}$ & \makecell[l]{The matrix of user pairing of which the $(m, n)$-th \\element is $x_{m,n}$}\\
	 $\mathrm{x}$ & \makecell[l]{The vectorization of the elements above the main \\diagonal of $\mathbf{X}$ in the row-major order} \\
	 $\mathbf{\hat{X}}$ & \makecell[l]{The continuous relaxation of $\mathbf{X}$ with the $(m, n)$-th element, \\ $\hat{x}_{m,n}\in[0, 1]$, indicating how likely user $m$ and $n$ are \\ paired to share a resource block.}\\
	 $\mathrm{\hat{x}}$ & \makecell[l]{The vectorization of the elements above the main \\diagonal of $\mathbf{\hat{X}}$ in row-major order} \\
	 $p_{m,n}$ & \makecell[l]{The transmit powers for paired users $m$ and $n$} \\
%	 $p_m$ & \makecell[l]{The transmit power for user $m$}\\
	 $\mathbf{p}$ & \makecell[l]{The vectorization of $\left(p_m+p_n\right)$ in row-major order}\\
	 $\mathbf{1}$ & A vector with all one entries\\
	 $\mathbf{0}$ & A vector with all zero entries\\
%	 $\preccurlyeq $ & \makecell[l]{Componentwise inequality between two vectors}\\
	 $\mathbf{w}$ & The Lagrange variable \\
	 $\mathbf{I}$ & The identify matrix\\
	 $\mathbf{A}$ &
	 $\mathbf{A}=\left[\mathbf{I}, -\mathbf{I}, \mathbf{p}\right]^T$\\
	 $\mathbf{K}\left(\mathbf{x}, \mathbf{w}\right)$ & The Karush-Kuhn-Tucker (KKT) matrix\\
%	 \hline
%	 $\mathbf{D}$ & \makecell[l]{the $n$-th row of $\mathbf{D}\in \mathbb{R}^{2K\times K(2K-1)}$, denoted by $\mathbf{d}_n^T$,\\ satisfies ${\mathbf{d}_n^T\hat{\mathbf{x} }= \sum_{m-1}^{2K}\hat{x}_{m, n}}$}\\
%	 \hline
%	 $\mathbf{J}\left( \mathbf{w}\right)$ & $\mathbf{J}\left( \mathbf{w}\right)=\left(\nabla g\left(\hat{\mathbf{x}}\right) + \mathbf{D}^T\mathbf{w}, \mathbf{D\hat{x}}-\mathbf{1}\right)$\\
	 $S$ & The upper bound of $\begin{Vmatrix}\mathbf{K}\left(\mathbf{x}, \mathbf{w}\right)^{\dagger} \end{Vmatrix}_F$\\
	 $L$ & \makecell[l]{The Lipschitz constant satisfying:
	 $\forall \left(\mathbf{x}_i, \mathbf{w}_i\right)$, \\ $\begin{Vmatrix} \left(\mathbf{x}_i, \mathbf{w}_i\right)\end{Vmatrix}_F
	 \leqslant
	 \begin{Vmatrix} \left(\mathbf{x}^{(0)}, \mathbf{w}^{(0)}\right)\end{Vmatrix}_F$}\\
	 $N$ & \makecell[l]{The number of iterations for user pairing to converge}\\
	 $\zeta$, $\tau$ & The control factors in backtracking line search\\
	 $\epsilon$ & The error tolerance of user pairing\\
	 $\xi$ & The control factor of the step size for user pairing \\
	 $\eta$ & The tolerance level of the overall algorithm\\
%	 $\mathcal{O}$ & Asymptotic notation\\
% 	%  $N_l$ & \makecell[l]{The number of iterations of the backtracking\\ line search in the damped Newton phase}\\
% 	%  \hline
% 	%  $N_\mathrm{N}$ & \makecell[l]{The number of iterations of the backtracking\\ line search}\\
% 	%  \hline
% 	%  $\mathcal{T}_1$ & \makecell[l]{The complexity of backtracking line search in the \\damped Newton phase in each iteration}\\
% 	%  \hline
% 	%  $\mathcal{T}_s$ & \makecell[l]{The complexity of backtracking line search in the \\damped Newton phase}\\
% 	%  \hline
% 	 $\mathcal{T}_{\mathbf{\hat{x}}}$ & \makecell[l]{The complexity of updating $\mathbf{\hat{x}}$}\\
% 	 $\mathcal{T}_{\mathbf{w}}$ & \makecell[l]{The complexity of updating $\mathbf{w}$}\\
% 	 $N_\mathrm{D}$ & The number of iterations of damped Newton phase\\
% 	 $\mathcal{T}_{\mathrm{D}}$ & The complexity of the damped Newton phase\\
% 	 $N_{\mathrm{Q}}$ & \makecell[l]{The number of iterations of the quadratically \\convergent phase}\\
% 	 $\mathcal{T}_{\mathrm{Q}}$ & The complexity of the quadratically convergent phase\\
% 	 $\mathcal{T}_{\mathrm{LP}}$ & The complexity of linear programming\\
% 	 $\mathcal{T}_g$ & \makecell[l]{The complexity of greedy based approach in user \\pairing algorithm}\\
% 	 $\mathcal{T}$ & The complexity of overall algorithm \\
% 	 $\mathrm{graph}\left(\cdot\right)$ & The graph of a mapping\\
	 $\delta_{\mathcal{F}}$ & The indicator function on the feasible domain $\mathcal{F}$\\
	 \hline
	\end{tabular}
	\end{center}
\end{table}
}

Compared with a two-user setting~\cite{Thapar2020,Hota2022,Amin2022a}, the security of a multi-user untrusted scenario is a more realistic and challenging problem. The most relevant, existing study \cite{Thapar2021} proposed a decoding order strategy for multi-user untrusted NOMA with fixed power allocation. However, excessive devices sharing the same resource block may lead to severe co-channel interference. To this end, an adequate user pairing strategy, in coupling with effective power allocation, is critical. In \cite{Zhou2017} and \cite{Ficken2015}, a Gale-Shapley algorithm-based and a Simplex method-based approaches were developed and dedicated to user pairing, respectively. Compared to the user pairing strategies developed in \cite{Zhou2017} and \cite{Ficken2015}, our approach delivers effective user pairing solution using logarithmic barrier method in couple with closed-form optimal power allocation, hence achieving improved efficiency and accuracy.

\section{System model}
In this paper, we investigate a multiuser downlink NOMA system with a base station (BS) and $2K$ untrusted users.
The users are untrusted in the sense that each user in the system may act as a potential eavesdropper and may attempt to intercept the confidential messages transmitted by other users to its own advantage.
The users are dispersed within a disc-shaped area $\mathcal{D}$ centered at the BS. The BS and users are equipped with omnidirectional antennas. The direct link between the BS and each user experiences Rayleigh fading~\cite{WangTCOM2015}. In order to reduce complexity, we divide the users into $K$ pairs, with each pair occupying a different resource block. This allows us to consider the system in manageable chunks and design efficient resource allocation strategies.

At the BS, the transmit signal for the users at each pair is
\begin{equation}
s = \sqrt {{p_m}} {s_m} + \sqrt {{p_n}} {s_n},
\end{equation}
where ${s_{k}}$ $\left( {k = m,n} \right)$ is the data symbol destined for user $k$ with unit energy ${\mathbb E}[|{s_{k}}|^2] = 1$, and ${p_{k}}$ represents the corresponding transmit power assigned for user.

The received signal of user $k$ is given by
\begin{align}
{y_k} = {h_k}\left( {\sqrt {{p_m}} {s_m} + \sqrt {{p_n}} {s_n}} \right) + {\omega _k},
\end{align}
where ${h_k} = {\text{ }}{{\text{g}}_k}d_k^{ - \alpha }$ with $g_k$ being the Rayleigh fading coefficient, ${d_k}$ the distance of user $k$ from the BS, and $\alpha $ the path loss;  ${w_{k}}$ is the zero-mean additive white Gaussian noise (AWGN) with variance ${\sigma ^2}$.

Assume the paired user with ${\left| {{h_m}} \right|^2} < {\left| {{h_n}} \right|^2}$. By following the NOMA principle, user $n$ with a higher channel gain first decodes the signal of user $m$, and then executes SIC to decode its own signal. User $m$ with poor channel gain first decodes its own signal and then executes SIC to decode user $n$'s signal. As such, we have
\begin{align}
&{\gamma _{m,n}} = \frac{{{p_m}{{\left| {{h_n}} \right|}^2}}}{{{p_n}{{\left| {{h_n}} \right|}^2} + {\sigma ^2}}},\hspace{+4mm} {\gamma _{n,n}} = \frac{{{p_n}{{\left| {{h_n}} \right|}^2}}}{{{\sigma ^2}}},\\
&{\gamma _{m,m}} = \frac{{{p_m}{{\left| {{h_m}} \right|}^2}}}{{{p_n}{{\left| {{h_m}} \right|}^2} + {\sigma ^2}}}, \hspace{+3mm}{\gamma _{n,m}} = \frac{{{p_n}{{\left| {{h_m}} \right|}^2}}}{{{\sigma ^2}}},
\end{align}
where ${\gamma _{m,n}}$ is the signal-to-interference-plus-noise-ratio (SINR)
of user $m$ decoded by user $n$, and ${\gamma _{n,m}}$ is the other way around.

Then, the achievable rates of the paired users are
\begin{align}
 & {R_{n,n}} = {\log _2}\left( {1 + {\gamma _{n,n}}} \right); \hfill \\
 & {R_{m,m}} = {\log _2}\left( {1 + {\gamma _{m,m}}} \right). \hfill
\end{align}

The secrecy rate ${R^s_{n}}$ of user $n$ is defined as
\begin{align}
{R^s_{n}} = \max \left\{ {{R_{n,n}} - {R_{n,m}},0} \right\}.
\end{align}
Here, ${R_{n,m}} = {\log _2}\left( {1 + {\gamma _{n,m}}} \right)$  is the eavesdropping rate of user $m$ on user $n$'s message. A positive secrecy rate can be awarded since ${\left| {{h_m}} \right|^2} < {\left| {{h_n}} \right|^2}$.

\section{Problem Statement and Proposed Solution}
In this section, power allocation and user pairing are optimized in an attempt to achieve the maximum sum secrecy rate under the data rate and  transmit power constraints.
%Let ${x_{m,n}} \in \left\{ {0,1} \right\}$, $1 \leqslant m,n \leqslant 2K$. If user $m$ is served together with user $n$, then ${x_{m,n}} = 1$. Otherwise, ${x_{m,n}} = 0$.
Let ${x_{m,n}} \in \left\{ {0,1} \right\}$ denote the binary scheduling variables. If user $m$ is served together with user $n$, we have ${x_{m,n}} = 1$. Otherwise, ${x_{m,n}} = 0$.  The considered problem is cast as
\begin{subequations}
\label{equ:original_problem}
\begin{align}
  \max\limits_{{x_{m,n}},{p_n},{p_m}} \hspace{+3mm}&\sum\limits_{m = 1}^{2K} {\sum\limits_{n = m + 1}^{2K}  } {{x_{m,n}}}{R^s_{n}} \hfill \\
 \text{s.t.} \hspace{+3.5mm}&{R_{m,m}} \geqslant {x_{m,n}}{R_m}, \hfill \\
 &{R_{n,n}} \geqslant {x_{m,n}}{R_n}, \hfill \\
 & \sum\nolimits_{m = 1}^{2K} {\sum\nolimits_{n = m + 1}^{2K} {{x_{m,n}}\left( {{p_n} + {p_m}} \right)} }  \leqslant P,\hfill \\
 &{x_{m,n}} \in \left\{ {0,1} \right\},1 \leqslant m,n \leqslant 2K, \hfill \\
& {x_{m,n}} = {x_{n,m}},1 \leqslant m,n \leqslant 2K, \hfill \\
&  \sum\nolimits_{m = 1}^{2K} {{x_{m,n}}}  = 1,1 \leqslant n \leqslant 2K, \hfill \\
&\sum\nolimits_{n = 1}^{2K} {{x_{m,n}}}  = 1,1 \leqslant m \leqslant 2K, \hfill
\end{align}
\end{subequations}
where $P$ is the total transmit power of the BS; ${R_m}$ and ${R_n}$ are the achievable rates of user $m$ and user $n$ in an OMA system, respectively, and
$${R_m}\! = \!\frac{1}{2}{\log _2}\left( \!\! {1 \!\! + \!\frac{{{p_{m,n}}{{\left| {{h_m}} \right|}^2}}}{{{\sigma ^2}}}} \!\!\right);{R_n} \!= \!\frac{1}{2}{\log _2}\left(\!\! {1 \!\!+\! \frac{{{p_{m,n}}{{\left| {{h_n}} \right|}^2}}}{{{\sigma ^2}}}}\!\! \right).$$
Here, ${p_{m,n}}$ is the transmit power for each pair of users, i.e. ${p_{m,n}} = {p_m} + {p_n}$. The coefficient $\frac{1}{2}$ is due to the fact that conventional OMA results in a multiplexing loss of $\frac{1}{2}$.

The problem presented in (\ref{equ:original_problem}) is a mixed-integer nonlinear programming (MINLP) problem, which is typically NP-hard and intractable to solve the global optimal solution. The key difficultly in solving (\ref{equ:original_problem}) arises from the binary scheduling variables, achievable data rate constraint, and objective function. To improve the tractability, in this paper, we decouple Problem (\ref{equ:original_problem}) into the subproblem of power allocation and user pairing, and solve the subproblems separately in an alternating manner.

\subsection{Power Allocation Optimization}
\label{sec:power_allocation_opt}
First, we optimize the transmit power of each user ${p_n}$ and ${p_m}$ for given user pairing ${x_{m,n}}$. The power allocation subproblem is given by
\begin{subequations}
\label{simfily_power_allocation}
\begin{align}
 \label{equ:power_allocation_R} \mathop {\mathrm{max} }\limits_{{p_n},{p_m}}\hspace{4.5mm}& \sum\limits_{m = 1}^{2K} {\sum\limits_{n \in {S_m}} {R^s_{n}} } \hfill \\
 \text{s.t.}\label{equ:power_allocation_Rn}\hspace{+5mm}&{R_{m,m}} \geqslant {R_m}, \hfill \\
  &\label{equ:power_allocation_Rm}{R_{n,n}} \geqslant {R_n} ,\hfill \\
 & \sum\nolimits_{m = 1}^{2K} {\sum\nolimits_{n \in {S_m}} {\left( {{p_n} + {p_m}} \right)} }  \leqslant P,\hfill
\end{align}
\end{subequations}
where ${S_m} = \left\{ {n\left| {{x_{m,n}} = 1} \right.} \right\}$. Despite the non-convexity of the subproblem, we can derive its closed-form solution, as follows.

According to (\ref{equ:power_allocation_Rn}), we have
\begin{align}
{p_n} \leqslant \frac{{{\sigma ^2}}}{{{{\left| {{h_m}} \right|}^2}}}\left(\sqrt {1 + \frac{{{p_{m,n}}{{\left| {{h_m}} \right|}^2}}}{{{\sigma ^2}}}}  - 1\right).
\end{align}
Similarly, according to (\ref{equ:power_allocation_Rm}), we have
\begin{align}
{p_n} \geqslant \frac{{{\sigma ^2}}}{{{{\left| {{h_n}} \right|}^2}}}\left(\sqrt {1 + \frac{{{p_{m,n}}{{\left| {{h_n}} \right|}^2}}}{{{\sigma ^2}}}}  - 1\right)
\end{align}

The first-order partial derivative of ${R^s_{n}}$ with respect to (w.r.t.) ${p_n}$ is
\begin{align}
&\label{first_order}
\frac{{\partial {R^s_{n}} }}{{\partial {p_n}}} = \frac{1}{{\ln 2}}\frac{{\left( {{{\left| {{h_n}} \right|}^2}{\sigma ^2} - {{\left| {{h_m}} \right|}^2}{\sigma ^2}} \right)}}{{({p_n}{{\left| {{h_n}} \right|}^2} + {\sigma ^2})({p_n}{{\left| {{h_m}} \right|}^2} + {\sigma ^2})}},
\end{align}
which is always non-negative, given ${\left| {{h_m}} \right|^2} < {\left| {{h_n}} \right|^2}$. As a result, ${R^s_{n}}$ is an increasing function of ${{p_n}}$,  and the optimal value of ${{p_n}}$, denoted by $p_n^*$, is given by
\begin{align}
\label{equ:power_allocation_pn_result}
{p^*_n} = \frac{{{\sigma ^2}}}{{{{\left| {{h_m}} \right|}^2}}}\left(\sqrt {1 + \frac{{{p_{m,n}}{{\left| {{h_m}} \right|}^2}}}{{{\sigma ^2}}}}  - 1\right).
\end{align}
By substituting $p_n^*$ into the objective function (\ref{equ:power_allocation_R}), we obtain
\begin{align}
	\label{equ:Rsn_equation}
{R^s_{n}}=& {\log _2}\left(1 + \frac{{{{\left| {{h_n}} \right|}^2}}}{{{{\left| {{h_m}} \right|}^2}}}\left(\sqrt {1 + \frac{{{p_{m,n}}{{\left| {{h_m}} \right|}^2}}}{{{\sigma ^2}}}}  - 1\right)\right) \nonumber \\
&- \frac{1}{2}{\log _2}\left(1 + \frac{{{p_{m,n}}{{\left| {{h_m}} \right|}^2}}}{{{\sigma ^2}}}\right).
\end{align}

Then, Problem (\ref{simfily_power_allocation}) can be equivalently rewritten as
\begin{subequations}
\label{simfily_power_allocation_pmn}
\begin{align}
	\label{equ:objective_further_derivation}
  \mathop {\mathrm{max} }\limits_{{p_{m,n}}}\hspace{2mm}& \sum\limits_{m = 1}^{2K} {\sum\limits_{n \in {S_n}} {R^s_{n}} }  \hfill \\
  	\label{equ:objective_further_derivation_constrain}
 \text{s.t.}\hspace{2.6mm}&\sum\nolimits_{m = 1}^{2K} {\sum\nolimits_{n \in {S_n}} {{x_{m,n}}{p_{m,n}}} }  \leqslant P.\hfill
\end{align}
\end{subequations}
Taking the second-order derivative of ${R^s_{n}}$ w.r.t. ${p_n}$ yields
\begin{equation}
	\frac{{{{d}^2}{R^s_{n}}}}{{{d}\,p_{m,n}^2}} =  - \frac{{{{\left| {{h_m}} \right|}_2}{{\left| {{h_n}} \right|}_2}\ln 2}}{{4{\sigma ^4}{{\left( {1 + \frac{{{{\left| {{h_m}} \right|}_2}{p_{m,n}}}}{{{\sigma ^2}}}} \right)}^{\frac{3}{2}}}}} < 0.
\end{equation}
Therefore, we can drawn the conclusion that (\ref{equ:objective_further_derivation}) is concave in ${p_{m,n}}$. In turn, Problem (\ref{simfily_power_allocation_pmn}) exhibits convexity and can be efficiently solved taking the Lagrange multiplier method. The Lagrange function of Problem (\ref{simfily_power_allocation_pmn}) is
\begin{align}
\label{Lagrange_equation}
  L({p_{m,n}},\!\upsilon )\!\!=& \!\!- \!\!\sum\limits_{m = 1}^{2K} {\sum\limits_{n \in {S_n}} {{{\!\!\!\log }_2}\bigg(\!\!1\!\! +\!\! \frac{{{{\left| {{h_n}} \right|}^2}}}{{{{\left| {{h_m}} \right|}^2}}}\Big(\sqrt {1\!\! +\!\! \frac{{{p_{m,n}}{{\left| {{h_m}} \right|}^2}}}{{{\sigma ^2}}}}  \!\!- \!\!1\Big)\!\!\bigg)} }  \hfill \nonumber\\
 &  \!+\!\! \sum\nolimits_{m = 1}^{2K} {\sum\nolimits_{n \in {S_n}} {\frac{1}{2}{{\log }_2}\Big(1 + \frac{{{p_{m,n}}{{\left| {{h_m}} \right|}^2}}}{{{\sigma ^2}}}\Big)} }  \hfill \nonumber\\
  & \!+\!\! \upsilon \sum\nolimits_{m = 1}^{2K} {\sum\nolimits_{n \in {S_n}} ({{p_{m,n}}} }  - P) ,\hfill
\end{align}
where $\upsilon  > 0$ is the dual variable corresponding to  \eqref{equ:objective_further_derivation_constrain}.

After taking the first-order partial derivative of (\ref{Lagrange_equation}) w.r.t. ${p_{m,n}}$, the KKT conditions of (\ref{simfily_power_allocation_pmn}) are given by
\begin{align}
\frac{{\partial L({p_{m,n}},\upsilon )}}{{\partial {p_{m,n}}}} = {\alpha ^3} \hspace{-0.5mm}- \hspace{-0.5mm}\frac{{{{\left| {{h_n}} \right|}^2} \hspace{-0.5mm}-\hspace{-0.5mm} {{\left| {{h_m}} \right|}^2}}}{{{{\left| {{h_n}} \right|}^2}}}{\alpha ^2}\hspace{-0.5mm} - \hspace{-0.5mm} \frac{{{{\left| {{h_m}} \right|}^2}({{\left| {{h_n}} \right|}^2} \hspace{-0.5mm}- \hspace{-0.5mm}{{\left| {{h_m}} \right|}^2})}}{{2\ln 2{\sigma ^2}\upsilon {{\left| {{h_n}} \right|}^2}}},\nonumber
\end{align}
where $\alpha  = \sqrt {\frac{{{p_{m,n}}{{\left| {{h_m}} \right|}^2}}}{{{\sigma ^2}}} + 1} >1$.

Since the optimal solution to Problem \eqref{simfily_power_allocation_pmn} satisfies $\frac{{\partial L({p_{m,n}},\upsilon )}}{{\partial {p_{m,n}}}} = 0 $, we define $f(\alpha): \mathbb{R}^+\to \mathbb{R}$ as
\begin{align}
\label{equation_am}
f(\alpha ) = {\alpha ^3} - \frac{{{{\left| {{h_n}} \right|}^2} - {{\left| {{h_m}} \right|}^2}}}{{{{\left| {{h_n}} \right|}^2}}}{\alpha ^2} - \frac{{{{\left| {{h_m}} \right|}^2}({{\left| {{h_n}} \right|}^2} - {{\left| {{h_m}} \right|}^2})}}{{2\ln 2{\sigma ^2}\upsilon {{\left| {{h_n}} \right|}^2}}}.\nonumber
\end{align}
%{\color{black}Please describe why you define $f(\alpha)$.}
Note that $f(\alpha) = 0$ holds at the optimal $p_{m,n}$.
Taking the first-order derivative of $f(\alpha )$ w.r.t. $\alpha $, we have
\begin{align}
\frac{{\mathrm{d} f(\alpha )}}{{\mathrm{d} \alpha }}
= 3\alpha\left[\alpha - \frac{2}{3}\left(1-\frac{{ {{\left| {{h_m}} \right|}^2}}}{{{{\left| {{h_n}} \right|}^2}}} \right)\right].
\end{align}
Then, setting $\frac{{\mathrm{d} f(\alpha )}}{{\mathrm{d} \alpha }} = 0$ yields two roots:
\begin{align}
{\alpha _1} = 0, \text{ }{\alpha _2}  \!=\! \frac{2}{3}\left(1 - \frac{{{{\left| {{h_m}} \right|}^2}}}{{{{\left| {{h_n}} \right|}^2}}}\right) < \frac{2}{3}.
\end{align}
%{\color{black}Please describe how you solve (18) here. Surprisingly, many reviewers do not know to solve quadratic equations.}

We can analyze the number of positive roots of $f(\alpha) = 0 $ based on the monotonicity of $f(\alpha)$.
Since $f(\alpha_1) < 0$, $f(\alpha_2) <0$, and $\alpha_1 < \alpha_2$, according to the monotonicity  of a cubic function, there is only one positive root of $f(\alpha)=0$:
\begin{equation}
	\label{solution_a}
	\alpha  = \frac{{\sqrt[3]{a_{m,n}}}}{{3 \cdot {\sqrt[3]{4}}}} + \frac{{{\sqrt[3]{4}}{{({{\left| {{h_n}} \right|}^2} - {{\left| {{h_m}} \right|}^2})}^2}}}{{3\sqrt[3]{a_{m,n}}{{\left| {{h_n}} \right|}^4}}} + \frac{{{{\left| {{h_n}} \right|}^2} - {{\left| {{h_m}} \right|}^2}}}{{3{{\left| {{h_n}} \right|}^2}}},
\end{equation}
where $a_{m,n}$ is given by
%\begin{figure*}[b]
 %   \hrulefill
    \begin{align}
        {a_{m,n}} &= 4 - \frac{{4{{\left| {{h_m}} \right|}^6}}}{{{{\left| {{h_n}} \right|}^6}}}\! +\! {\left| {{h_m}} \right|^2}\left(\frac{{27}}{{\ln 2{\sigma ^2}\upsilon }} \!-\! \frac{{12}}{{{{\left| {{h_n}} \right|}^2}}} \right) \notag\\
        +& \frac{{3{{\left| {{h_m}} \right|}^4}}}{{{{\left| {{h_n}} \right|}^4}}}\left(4 - \frac{{9{{\left| {{h_n}} \right|}^2}}}{{\ln 2{\sigma ^2}\upsilon }}\right) + 3\sqrt 3 \frac{{\left| {{h_m}} \right|\left({{\left| {{h_n}} \right|}^2} - {{\left| {{h_m}} \right|}^2}\right)}}{{{{\left| {{h_n}} \right|}^4}\ln 2{\sigma ^2}\upsilon }}  \notag\\
\times &\sqrt {8\ln 2{\sigma ^2}\upsilon \left({{\left| {{h_n}} \right|}^2} - {{\left| {{h_m}} \right|}^2}\right) + 27{{\left| {{h_n}} \right|}^4}{{\left| {{h_m}} \right|}^2}}.
    \end{align}
%\end{figure*}
We solve \eqref{solution_a} using Formula of Cardano \cite{Bronshtein2015} and choose the positive root. The Formula of Cardano is a widely-used approach for solving cubic equations.

When $x_{m,n}=1$, by substituting
$\alpha  = \sqrt {\frac{{{p_{m,n}}{{\left| {{h_m}} \right|}^2}}}{{{\sigma ^2}}} + 1} $ into (\ref{solution_a}), the optimal power allocation,
 denoted by $p_{m,n}^*$, can be obtained in closed-form, as given by
\begin{align}
	\label{equ:pmn_results}
p_{m,n}^{*}=&\dfrac{\sigma ^2}{\left| h_m \right|^2}\left[ \left( \frac{\sqrt[3]{a_{m,n}}}{\sqrt[3]{4}}+\frac{\sqrt[3]{4}\left( \left| h_n \right|^2-\left| h_m \right|^2 \right)}{3\sqrt[3]{a_{m,n}}\left| h_n \right|^4} \right. \right.
 \nonumber\\
&\hspace{30mm}
\left. \left. +\frac{\left| h_n \right|^2-\left| h_m \right|^2}{3\left| h_n \right|^2} \right) -1 \right] ,
\end{align}
Moreover, according to \eqref{equ:power_allocation_pn_result} and \eqref{equ:pmn_results}, the optimal $p_m$, denoted by $p_m^*$, can be obtained since $p_m^*=p_{m,n}^*-p_n^*$. By adjusting the dual variable $\upsilon$ until
% \begin{equation}
% 	\label{equ:solve_the_semi_closed_solution}
	$\sum_{k=1}^{2K}p_k^*=P, \, \forall k=1,\cdots, 2K$,
% \end{equation}
we obtain the optimal transmit power $p_k^*,\, \forall k$.
%{\color{black}Do we have any insight here based on (21)?}

\subsection{User Pairing Optimization}
Given the power allocation $p_n^*$ and $p_m^*$, we can relax the binary variables ${x_{m,n}} \!\!\in\!\! \left\{0, 1\right\}$ into continuous variables $\hat{x}_{m,n}\in [0,1]$. Problem \eqref{equ:original_problem} can be recast as
\begin{subequations}
    \label{equ:relaxation_problem}
	\begin{align}
    \label{equ:objective_pairing_relaxation}
    \mathop{\mathrm{max}}_{x_{m,n}} \hspace{+2.5mm}& \sum_{m=1}^{2K}\sum_{n=1}^{2K} {\hat{x}_{m,n}}  {R^s_{n}}\\
    \label{equ:rm_pairing_relaxation}
   \text{s.t.} \hspace{+3.5mm}& {R_{m,m}} \geqslant {\hat{x}_{m,n}} R_m,\\
    \label{equ:rn_pairing_relaxation}
    &  {R_{n,n}} \geqslant {\hat{x}_{m,n}} R_n,\\
    \label{equ:sum_power_pairing_relaxation}
    &  \sum\nolimits_{m = 1}^{2K} {\sum\nolimits_{n = m + 1}^{2K} {{\hat{x}_{m,n}}\left( {{p_n} + {p_m}} \right)} }  \leqslant P,\\
    \label{equ:integer_pairing_relaxation}
    &  0 \leqslant {\hat{x}_{m,n}}, \leqslant 1,1 \leqslant m,n \leqslant 2K,\\
    \label{equ:xmm_0_pairing_relaxation}
    &   {\hat{x}_{m,n}} = 0, \\
    \label{equ:sum_m_pairing_relaxation}
    &   \sum\nolimits_{m = 1}^{2K} {\hat{x}_{m,n}}  = 1,1 \leqslant n \leqslant 2K,  \\
    \label{equ:last_constraints}
& \sum\nolimits_{n = 1}^{2K} {\hat{x}_{m,n}}  = 1,1 \leqslant m \leqslant 2K,
\end{align}
\end{subequations}
Here, $\hat{x}_{m,n}  \in [0,1]$ can be interpreted as how likely users $m$ and $n$ are assigned to form a NOMA group and share the same resource block.
% \sout{Furthermore, $\mathbf{X}\!=\!\left\{x_{m,n}\right\}\in \mathbb{R}^{2K\times 2K}$ is actually a matrix of fuzzy assignment.}

% {\color{black}Whhat is fuzzy assignment?}

%Let ${\color{black}\mathbf{a}_1} \preccurlyeq {\color{black}\mathbf{a}_2}$ define componentwise inequality between two vectors ${\color{black}\mathbf{a}_1} $ and ${\color{black}\mathbf{a}_2}$; i.e., each element of ${\color{black}\mathbf{a}_1}$ is no greater than the corresponding element of ${\color{black}\mathbf{a}_2}$.
% {\color{black}What does this mean? Matrix inequality? You may just say this ``indicates that any element of $\mathbf{A}$ is no larger than the corresponding element of $\mathbf{B}$.''}
By vectorization, Problem (\ref{equ:relaxation_problem}) is rewritten as
\begin{subequations}
    \label{equ:relaxation_vector}
\begin{align}
    \label{equ:objective_pairing_vec}
    \mathop{\mathrm{max}}_{\color{black}\hat{\mathbf{x}}} &\quad \mathbf{r}_s^T{\color{black}\hat{\mathbf{x}}}\\
    \label{equ:rm_pairing_vec}
    \text{s.t.} &\hspace{3mm}{\color{black}\hat{\mathbf{x}}} \preccurlyeq \mathbf{b},\\
    \label{equ:sum_power_pairing_vec}
    & \hspace{3mm} \mathbf{p}^T{\color{black}\hat{\mathbf{x}}} \leqslant P,\\
    \label{equ:integer_pairing_vec}
    & \hspace{3mm} -{\color{black}\hat{\mathbf{x}}} \preccurlyeq  \mathbf{0},\\
    \label{equ:sum_m_pairing_rvec}
    & \hspace{3mm} \mathbf{D}{\color{black}\hat{\mathbf{x}}} = \mathbf{1},
\end{align}
\end{subequations}
where ${\hat{\mathbf{x}}\!\in\! \mathbb{R}^{K(2K-1)}}$ and $\mathbf{p}\!\!\in\!\! \mathbb{R}^{K(2K-1)}$ are the vectorization of the elements above the main diagonal of ${\color{black}\hat{\mathbf{X}}}$  and $\left\{p_m+p_n\right\}$ in the row-major order, respectively.
Then, the $\left[\frac{1}{2}\left(4K-m\right)\left(m-1\right)+n-m\right]$-th elements of ${\color{black}\hat{\mathbf{x}}}$, $\mathbf{p}$, $\mathbf{r}_s\!\in\! \mathbb{R}^{K(2K-1)}$, and $\mathbf{b}\in \mathbb{R}^{K(2K-1)}$ are ${\color{black} \hat{x}_{m,n}}$, $p_m+p_n$, $R^s_{n}$ and $\min\left\{\frac{{R_{m,m}}}{R_m}, \frac{{R_{n,n}}}{R_n}, 1\right\}$, respectively. Moreover, the $n$-th row of $\mathbf{D}\in \mathbb{R}^{2K\times K(2K-1)}$, denoted by $\mathbf{d}_n^T$, satisfies
${\color{black}\mathbf{d}_n^T\hat{\mathbf{x} }= \sum_{m-1}^{2K}\hat{x}_{m, n}}$, according to constraints  \eqref{equ:sum_m_pairing_relaxation} and \eqref{equ:last_constraints}.

By combining \eqref{equ:rm_pairing_vec}, \eqref{equ:sum_power_pairing_vec}, and \eqref{equ:integer_pairing_vec}, Problem \eqref{equ:relaxation_vector} can be further rewritten as
\begin{subequations}
    \label{equ:newton_pairing_matrix}
\begin{align}
    \label{equ:objective_pairing_matrix}
    \mathop{\mathrm{max}}_{\hat{\mathbf{x}}} &\hspace{3.8mm} \mathbf{r}_s^T\hat{\mathbf{x}}\\
    \label{equ:rm_pairing_matrix}
    \text{s.t. } & \hspace{3mm}\mathbf{A} \hat{\mathbf{x}} \preccurlyeq \mathbf{u},\\
    \label{equ:sum_m_pairing_matrix}
    & \hspace{3mm}\mathbf{D}\hat{\mathbf{x}} = \mathbf{1},
\end{align}
\end{subequations}
where $\mathbf{A}=\left[\mathbf{I}, -\mathbf{I}, \mathbf{p}\right]^T\in\mathbb{R}^{\left[2K(2K-1)+1\right]\times K(2K-1)}$ and $\mathbf{u}=\left[\mathbf{b}^T, \mathbf{0}^T, P\right]^T\in \mathbb{R}^{2K(2K-1)+1}$.

The logarithmic barrier and Simplex methods~\cite{Ficken2015} are widely adopted by various Linear Programming (LP) solvers, e.g.,
%Gurobi~\cite{gurobi2022},
%CPLEX~\cite{cplex2021},
CVX Toolbox~\cite{LiTMC2016,LyuJSAC2017,Lyu2018}.
%~\cite{cvx2014},
%and GLPK~\cite{glp2010}.
Since $\mathbf{D}$ is sparse, the barrier method is more effective than the Simplex method in solving sparse LP problems \cite{LyuTCOM2018}.
%\cite{PingQi2014}.

We utilize the logarithmic barrier method~\cite{Boyd2004} to solve Problem (\ref{equ:newton_pairing_matrix}).
%{\color{black}Please describe the logarithmic barrier method here, and also why you choose it. You need to show that the algorithm is suitable for the purpose here.}
In the method, the optimization problem is modified by adding a logarithmic barrier function to the objective function. The barrier function penalizes the constraints, encouraging the optimization to move towards feasible solutions~\cite{Boyd2004}.
The barrier function is typically the sum of the negative logarithms of the variables that define the feasible region. The barrier function we choose is
\begin{align}
    \phi\left(\hat{\mathbf{x}}\right) = - \sum_{i=1}^{2K(2K-1)+1} \ln \left(u_i-\mathbf{a}_i^T\hat{\mathbf{x}}\right),
\end{align}
where $u_i\in \mathbb{R}$ and $\mathbf{a}_i\in \mathbb{R}^{K(2K-1)}$ are the $i$-th rows of $\mathbf{u}$ and $\mathbf{A}$, respectively.

% {\color{black} The logarithmic barrier method comprises a damped Newton phase and a quadratically convergent phase. During the damped Newton phase, the method attempts to find a solution to the problem considered using a variant of Newton's method.
% %which is an iterative optimization technique that uses the Hessian matrix of the objective function to find a local minimum.
% A damping factor ensures that the Hessian matrix is always positive definite, which helps to ensure convergence.}
%{\color{black}In the damped Newton phase,
Let $t >0$ denote the parameter (or step size) of the logarithmic barrier method.
%, which regulates the approximation accuracy of \eqref{equ:newton_pairing_matrix}.
Problem (\ref{equ:newton_pairing_matrix}) is then rewritten as
\begin{subequations}
    \label{equ:barrier_problem}
\begin{align}
    \label{equ:objective_pairing_barrier}
    \mathop{\mathrm{min}}_{\mathbf{x}} & \hspace{3mm}g\left(\hat{\mathbf{x}}\right) = -t\mathbf{r}_s^T\hat{\mathbf{x}} + \phi\left(\hat{\mathbf{x}}\right)\\
    \label{equ:sum_m_pairing_barrier}
    &\hspace{-4mm} \text{s.t.}\hspace{3mm}  \mathbf{D}\hat{\mathbf{x}} = \mathbf{1}.%\label{eq. equality cosntraint}
\end{align}
\end{subequations}
Let ${y_i=\frac{1}{u_i-\mathbf{a}_i^T\hat{\mathbf{x}}}}$ be the $i$-th element of $\mathbf{y}{\in \mathbb{R}^{2K(2K-1)+1}}$  and $\mathbf{w}\in \mathbb{R}^{2K}$ be the Lagrange multiplier associated with (\ref{equ:sum_m_pairing_barrier}).

%{\color{black} Given fixed $t$, Problem \eqref{equ:barrier_problem} can be solved using the infeasible start Newton method~\cite{Zhang2020a, Weihs2013}.}
%For a given $t$,

In each iteration of the logarithmic barrier method, we update $t$ by  $t:=\xi t$, which regulates the accuracy of using \eqref{equ:barrier_problem} to approximate \eqref{equ:newton_pairing_matrix}. Here, $\xi>1$ is a preconfigured coefficient. Given the fixed $t$, the infeasible start Newton method~\cite{Zhang2020a, Weihs2013} is adopted to solve \eqref{equ:barrier_problem} iteratively.

The infeasible start Newton method starts by evaluating the primal and dual Newton steps $\Delta \mathbf{w}\in \mathbb{R}^{2K}$ and ${\Delta \hat{\mathbf{x}}\in \mathbb{R}^{K(2K-1)}}$.
Given $t$, the primal and dual Newton steps of Problem \eqref{equ:barrier_problem} are given by

	\begin{equation}
	\label{equ:primal_newton_step}
	\mathbf{K}\left(\hat{\mathbf{x}}, \mathbf{w}\right)\cdot
	\begin{bmatrix}
		\Delta \hat{\mathbf{x}}\\
		\mathbf{w} + \Delta \mathbf{w}
	\end{bmatrix}
	=
	-
	\begin{bmatrix}
		-t\mathbf{r}_s+\mathbf{A}\mathbf{y}\\
		\mathbf{D\hat{x}}-\mathbf{1}
	\end{bmatrix},
\end{equation}
where $\mathbf{K}\left(\mathbf{\hat{x}}, \mathbf{w}\right)$ is the Karush-Kuhn-Tucker (KKT) matrix~\cite{vanderbei2020} and is given by
\begin{equation}\label{eq: coefficient matrix}
	\mathbf{K} \left(\hat{\mathbf{x}}, \mathbf{w}\right)= \begin{bmatrix}
		\mathbf{A}^T\mathrm{diag}\left(\mathbf{y}\right)\mathbf{A} & \mathbf{D}^T \\
		\mathbf{D} & \mathbf{0}
	\end{bmatrix}
\end{equation}		
We utilize the LU decomposition~\cite{Golub2013} to solve \eqref{equ:primal_newton_step} for $\Delta \hat{\mathbf{x}}$ and $\Delta \mathbf{w}$,
so that we can avoid computationally expensive matrix inversions~\cite{Heath2018}. Let $\mathbf{L},\mathbf{U} \in \mathbb{R}^{K(2K+1)\times K(2K+1)}$ denote the lower and higher triangular matrices, respectively, and
\begin{equation}
		\label{equ:LU_Decomposition}
	\mathbf{LU} = \mathbf{K}\left(\hat{\mathbf{x}}, \mathbf{w}\right).
\end{equation}
By using the forward and back substitution algorithms \cite{Cassel2021}, we can derive $\Delta \hat{\mathbf{x}}$ and $\Delta \mathbf{w}$.

Define $\mathbf{J}: \mathbb{R}^{K(2K-1)}\times \mathbb{R}^{2K}\to \mathbb{R}^{K(2K-1)}\times \mathbb{R}^{2K}$ as
	\begin{equation}
		\mathbf{J}\left(\mathbf{\hat{x}},  \mathbf{w}\right)	= \left(\nabla g\left(\hat{\mathbf{x}}\right) + \mathbf{D}^T\mathbf{w}, \mathbf{D\hat{x}}-\mathbf{1}\right).
	\end{equation}
The Frobenius norm of $\mathbf{J}\left(\mathbf{\hat{x}}, \mathbf{w}\right)$ is given by
	\begin{equation}
		\begin{Vmatrix}
			\mathbf{J}\left(\hat{\mathbf{x}}, \mathbf{w}\right)
		\end{Vmatrix}_F
		=	\sqrt{
			\begin{Vmatrix}
				\nabla g\left(\hat{\mathbf{x}}\right) + \mathbf{D}^T\mathbf{w}	
			\end{Vmatrix}_F^2
			+
			\begin{Vmatrix}
				\mathbf{D}\hat{\mathbf{x}}-\mathbf{1}
			\end{Vmatrix}_F^2
		}\,.
\end{equation}

Next, we utilize backtracking line search to produce the step size by $s:=\tau s$ for updating $\mathbf{\hat{x}}$ and $\mathbf{w}$, i.e.,
\begin{equation}\label{eq: update x and w}
	\mathbf{\hat{x}} := \mathbf{\hat{x}} + s\Delta \mathbf{\hat{x}}\quad \text{ and }\quad
	\mathbf{w}:=\mathbf{w} + s\Delta\mathbf{w}
\end{equation}
\begin{equation}\label{eq: convergence criterion}
{\rm until }
\begin{Vmatrix} \mathbf{J}\!\left(\hat{\mathbf{x}}\!\! +\!\! {\color{black} s} \Delta \hat{\mathbf{x}}, \mathbf{w}\!\!+\!\!{\color{black} s}\Delta\mathbf{w}\right)\end{Vmatrix}_F
\!\!{\color{black}\leqslant }\!\!
\left(1\!-\!\zeta {\color{black} s}\right)\begin{Vmatrix} \mathbf{J}\!\left(\mathbf{\hat{x}}, \mathbf{w}\right) \end{Vmatrix}_F.
\end{equation}
Here, $\tau\in (0,1)$ and $\zeta \in (0,\frac{1}{2})$ are preconfigured coefficients.

Upon the stopping criterion \eqref{eq: convergence criterion} is satisfied,
the updated $\mathbf{\hat{x}}$ and $\mathbf{w}$ are substituted into \eqref{equ:primal_newton_step} and \eqref{eq: coefficient matrix} to update $\Delta \hat{\mathbf{x}}$ and $\Delta \mathbf{w}$, followed by the updating of $\mathbf{\hat{x}}$ and $\mathbf{w}$ using \eqref{eq: update x and w}. This repeats until $\begin{Vmatrix} \mathbf{J}\left(\mathbf{\hat{x}}, \mathbf{w}\right) \end{Vmatrix}_F$ is smaller than a predefined, sufficiently small threshold, e.g., $\rho$, and \eqref{equ:sum_m_pairing_barrier} is satisfied.

Let $L$ and $S$ denote constants satisfying:
$\forall \left(\hat{\mathbf{x}}_i, \mathbf{w}_i\right)$, $i=1, 2$,
\begin{equation}
	\begin{Vmatrix}
		\mathbf{K}\!\left(\!\mathbf{\hat{x}}_1, \!\mathbf{w}_1\!\right)
		\!\!-\!\!
		\mathbf{K}\!\left(\!\mathbf{\hat{x}}_2, \!\mathbf{w}_2\!\right)
	\end{Vmatrix}_F
	\!\!\leqslant\!\!
	L\!
	\sqrt{
		\begin{Vmatrix}
			\mathbf{\hat{x}}_1\!\!-\!\!\mathbf{\hat{x}}_2
		\end{Vmatrix}_2^2
		\!+\!
		\begin{Vmatrix}
			\mathbf{w}_1\!\!-\!\!\mathbf{w}_2
		\end{Vmatrix}_2^2
	};
\end{equation}
%and
\begin{equation}S \geqslant \begin{Vmatrix}
	\mathbf{K}\left(\hat{\mathbf{x}}, \mathbf{w}\right)^{\dag}
\end{Vmatrix}_F.
\end{equation}
The step size $s$, obtained by backtracking line search, satisfies $s<1$ in the damped Newton phase if $\begin{Vmatrix}\mathbf{J}\left(\mathbf{\hat{x}}, \mathbf{w}\right)\end{Vmatrix}_F>\frac{1}{S^2L}$~\cite{Boyd2004}. Hence, $\begin{Vmatrix}\mathbf{J}\left(\mathbf{\hat{x}}, \mathbf{w}\right)\end{Vmatrix}$ is reduced in each iteration \cite{Andrei2022}.
Once the damped Newton phase has reasonably converged, i.e., $\begin{Vmatrix}\mathbf{J}\left(\mathbf{\hat{x}},  \mathbf{w}\right)\end{Vmatrix}_F\leqslant \frac{1}{S^2L}$, the logarithmic barrier method enters the quadratically convergent phase, where the step size is $s=1$ and the error converges quadratically to zero~\cite{Feller2017}. This allows the algorithm to find a high-precision solution in relatively few iterations.

When the infeasible start Newton method converges, $t$ is updated by $t:=\xi t$ and then the infeasible start Newton method restarts. This repeats until $\frac{1}{t}K\left(2K-1\right)<\epsilon$, where $\epsilon$ indicates the approximation accuracy of \eqref{equ:barrier_problem} with regards to \eqref{equ:newton_pairing_matrix}. The output of the logarithmic barrier method is the continuous relaxation of user pairing, i.e., $\hat{\mathbf{X}}$.

%{\color{black}\textbf{COMMENT:} The way how (27) was solved is very unclear. Please try to describe how it is done by talking about the damped Newton phase and the quadratic convergence phase here. Moreover, $\mathbf{J}\left(\mathbf{x}, \mathbf{w}\right)$ needs to be defined here.}

Finally, given the continuous $\hat{\mathbf{X}}$, we utilize a greedy method that iteratively chooses the most probable pairs. $\mathbf{U}$ is initialized to be empty, i.e., $\mathbf{U}=\varnothing$ initially.
In each iteration, we choose and record the pair $\left\{m, n\right\}$ with the largest $\hat{x}_{m,n}$ from unrecorded pairs, i.e., $\mathbf{U}\cap \left\{m,n\right\} = \varnothing$, since users $m$ and $n$ have the highest pairing probability among all users not recorded in $\mathbf{U}$ yet.

The algorithm of user pairing is summarized in Alg. \ref{alg:user_pairing}, where ${\color{black} \hat{\mathbf{x}}^*}$ denotes the optimum of (\ref{equ:newton_pairing_matrix}).

\begin{algorithm}
    \caption{User Pairing}\label{alg:user_pairing}
    \SetKw{Continue}{continue}
    \KwData{Initialize $t=t^{(0)}$, $\xi > 1$, $\epsilon > 0$, $\rho > 0$, control factors in backtracking line search $\zeta\in \left(0, 0.5\right)$, $\tau\in \left(0, 1\right)$, pairing set $\mathbf{U}=\varnothing $, $\mathbf{X}=\mathbf{0}$}
    \tcc{logarithmic barrier-based approach to obtain the assignment}
    \While{ $\frac{1}{t}K\left(2K-1\right)<\epsilon$}
    {
        %\sout{Using infeasible start Newton method to compute $\mathbf{x}^*$ by solving (\ref{equ:barrier_problem})}\\

         	\tcc{Using infeasible start Newton method to compute \eqref{equ:barrier_problem} with the given~$t$}
         	
         	{$s:=1$}
         	
         \While{$\mathbf{D\hat{x}}=\mathbf{1}$ \&\& $\begin{Vmatrix}\mathbf{J}\left(\hat{\mathbf{x}}, \mathbf{w}\right) \end{Vmatrix}_F\leqslant \rho$}
         {
         	Calculate $\Delta \hat{\mathbf{x}}$ and $\Delta \mathbf{w}$ in (\ref{equ:primal_newton_step})\\
         	\tcc{Backtracking line search to obtain step size $s$}
         	\While{$\begin{Vmatrix} \mathbf{J}\left(\hat{\mathbf{x}} + s\Delta \hat{\mathbf{x}}, \mathbf{w}+s\Delta\mathbf{w}\right)\end{Vmatrix}_F>\left(1-\zeta s\right)\begin{Vmatrix} \mathbf{J}\left(\mathbf{\hat{x}}, \mathbf{w}\right) \end{Vmatrix}_F$}
         	{$s:=\tau s$}
%         	Backtracking line search to obtain $t$\\
         	$\hat{\mathbf{x}}^*:=\hat{\mathbf{x}} + s\Delta \hat{\mathbf{x}}$ and  $\mathbf{w}:=\mathbf{w} + s\Delta \mathbf{w}$
         }
        }
        \tcp{update $\mathbf{\hat{x}}$ and $t$}
        $\hat{\mathbf{x}}:=\hat{\mathbf{x}}^*$\\
        $t:=\xi t$

      obtain the assignment $\hat{\mathbf{x}}$.\\
     \tcc{greedy-based approach to obtain the user pairing strategy}
    \For{$m=1,\cdots, 2K-1$}
    {
        \For{$n=m+1, \cdots, 2K$}
        {
            \If{$\mathbf{U}\cap \left\{m,n\right\} = \varnothing $}
            {
                Choose the largest element $\hat{x}_{m,n}$ and set $x_{m,n}=1$\\
                $\mathbf{U} = \mathbf{U} \cup \left\{m, n\right\}$\\
                Set {$\hat{x}_{m,n_0} = \hat{x}_{m_0, n} = -\infty$} for $n_0=1,\cdots, 2K$, and $m_0=1, \cdots, 2K$
                }
            \Else
            {
                \Continue
            }
        }
    }
    \Return{the user pairing strategy $\mathbf{X}=\left\{x_{m,n}\right\}$}
\end{algorithm}

%{\color{black} \textbf{COMMENT:} How about the overall algorithm? Does user pairing and power allocation need to be solved in an alternating manner, i.e., by using alternating optimization? If this is the case, we shall say why alternating optimization converges here, which is straightforward though.}

\subsection{Algorithm Summary}
The overall algorithm is illustrated in Alg. \ref{alg:overall}, which consists of two phases (i.e., power allocation and user pairing) operating in an alternating manner. In the power allocation phase, we utilize \eqref{equ:power_allocation_pn_result} and \eqref{equ:pmn_results}
%and \eqref{equ:solve_the_semi_closed_solution}
to obtain the transmit power of each user with a given fixed user pairing strategy $\mathbf{X}$. In the user pairing phase, Alg. \ref{alg:user_pairing} is executed to produce the user pairing strategy given the fixed transmit powers of all users.
The user pairing strategy is then input to the power allocation to start the next iteration of the power allocation and user pairing phases.
Let $o_q$ and $\eta$ denote the sum secrecy rate in the $q$-th iteration of Alg. \ref{alg:overall} and the tolerance, respectively. If $\left|o_q-o_{q-1}\right|<\eta$, Alg. \ref{alg:overall} returns the user pairing strategy $\mathbf{X}$ and transmit power of each user $p_n$, $n=1, \cdots, 2K$.

\begin{algorithm}
	\caption{Overall Algorithm}\label{alg:overall}
	\SetKw{Continue}{continue}
	\KwData{Initial $o_q = +\infty$, $o_{q-1}=-\infty$, counter $q=0$, tolerance $\eta>0$}
	% \tcc{logarithmic barrier-based approach to obtain the assignment}
	\While{$\left|o_q - o_{q-1}\right| < \eta$}
	{
		\tcp{record the previous sum secrecy rate}
		$o_q=\sum_{m=1}^{2K}\sum_{n=m+1}^{2K}x_{m,n}R^s_{n}$\\
		\tcp{power allocation phase}
		Use \eqref{equ:power_allocation_pn_result} and \eqref{equ:pmn_results}
		%and \eqref{equ:solve_the_semi_closed_solution}
		to obtain the optimal power allocation $p_{m,n}^*$\\
		\tcp{user pairing phase}
		Use Alg. \ref{alg:user_pairing} to obtain the pairing strategy $\mathbf{X}=\left\{x_{m,n}\right\}$\\
		\tcp{update the counter}
		$q:=q+1$\\
		\tcp{record the sum secrecy rate}
		$o_q =\sum_{m=1}^{2K}\sum_{n=m+1}^{2K}x_{m,n}R^s_{n}$
%			\tcc{Using infeasible start Newton method to compute (\ref{equ:barrier_problem})}
%			\While{$\mathbf{Dx}=\mathbf{1}$ and $\begin{Vmatrix}\mathbf{J}\left(\mathbf{x}, \mathbf{w}\right) \end{Vmatrix}_F\leqslant \rho$}
%			{
%				Calculate $\Delta \mathbf{x}$ and $\Delta \mathbf{w}$ in (\ref{equ:primal_newton_step})\\
%				\tcc{Backtracking line search to obtain step size $s$}
%				\While{$\begin{Vmatrix} \mathbf{J}\left(\mathbf{x} + t\Delta \mathbf{x}, \mathbf{w}+t\Delta\mathbf{w}\right)\end{Vmatrix}_F>\left(1-\zeta t\right)\begin{Vmatrix} \mathbf{J}\left(\mathbf{w}, \mathbf{w}\right) \end{Vmatrix}_F$}
%				{$s:=\tau s$}
%				%         	Backtracking line search to obtain $t$\\
%				$\mathbf{x}^*:=\mathbf{x} + s\Delta \mathbf{x}$, $\mathbf{w}:=\mathbf{w} + s\Delta \mathbf{w}$
%			}
%		$\mathbf{x}:=\mathbf{x}^*$\\
%		$t:=\xi t$
	}
%	obtain {\color{black} the} assignment $\mathbf{x}$.\\
%	\tcc{greedy-based approach to obtain the user pairing strategy}
%	\For{$m=1,\ldots, 2K-1$}
%	{
%		\For{$n=m+1, \ldots, 2K$}
%		{
%			\If{$\mathbf{U}\cap \left\{m,n\right\} = \varnothing $}
%			{
%				choose largest $x_{m,n}$\\
%				$\mathbf{U} = \mathbf{U} \cup \left\{m, n\right\}$\\
%				set $x_{m,n_0} = x_{m_0, n} = -\infty$ for $n_0=1,\ldots 2K$, $m_0=1, \ldots, 2K$
%			}
%			\Else
%			{
%				\Continue
%			}
%		}
%	}
	\Return{the user pairing strategy $\mathbf{X}=\left\{x_{m,n}\right\}$ and the power allocation $\left\{p_{n}\right\}$, $m, n=1, \ldots, 2K$}
\end{algorithm}

\subsection{Convergence Analysis}
\subsubsection{Convergence of User Pairing}
\label{sec:convergence_user_pairing}
We analyze the convergence rate of the LP relaxation in Alg.~\ref{alg:user_pairing}.
According to~\cite[eq. (11.13)]{Boyd2004}, the LP in Alg. \ref{alg:user_pairing} requires $N_{\mathrm{LP}}$ iterations to adjust the parameter $t$ and guarantee the desired accuracy level of $\epsilon$: %, where $N_{\mathrm{LP}}$ is given by
\begin{equation}
N_{\mathrm{LP}} = \left\lceil \log \left( \frac{K\left( 2K-1\right)  }{\epsilon t^{\left( 0\right)  }} \right)  /\log \left( \xi \right)   \right\rceil,
	\end{equation}
where $t^{(0)}$ is the initial value of $t$.

In each of the LP iterations, backtracking line search is carried out to search for the step size $s$.
According to \cite{Boyd2004}, the backtracking line search in the damped Newton phase uses fewer than $N_l = \left\lceil \log\left(S^2L\kappa 	\right)/\log\left(\frac{1}{\tau}\right) \right\rceil$ iterations to choose the step size $s$. Here, ${\color{black}\kappa = \begin{Vmatrix}\mathbf{J}\left(\hat{\mathbf{x}}^{(0)}, \mathbf{w}^{(0)}\right) \end{Vmatrix}_F}$, where ${\color{black}\hat{\mathbf{x}}^{(0)}}$ and $\mathbf{w}^{(0)}$ are the initial ${\color{black}\hat{\mathbf{x}}}$ and $\mathbf{w}$, respectively.
According to~\cite{Gallier2020}, the damped Newton phase takes $	N_\mathrm{D} = \left\lceil S^2L\kappa/\zeta\tau	\right\rceil$ iterations to achieve $\begin{Vmatrix}\mathbf{J}\left(\mathbf{\hat{x}}, \mathbf{w}\right)\end{Vmatrix}_F\leqslant \frac{1}{S^2L}$ before the commencement of the quadratically convergent phase.
%{\color{black}\textbf{ COMMENT: }How about $s$?} {\color{brown}\textit{COMMENT: Actually, there is no $s$ in $N_\mathrm{D}$, since the step size $s$ is under the control of $\zeta$ and $\tau$}}
In the quadratically convergent phase, according to \cite{Andrei2022}, it takes
$	N_\mathrm{Q} = \left\lceil\log_2\left(
1-\log_2\left(
S^2L\rho
\right)
\right)\right\rceil$
iterations to obtain the solution to Problem (\ref{equ:barrier_problem}).
%According to \cite{Andrei2022}, it takes $N_\mathrm{N}$ iterations for backtracking line search, where $N_\mathrm{N}$ is given by~\cite[eqs. (4.45), (4.47), \& (4.52)]{Andrei2022}
Overall, the infeasible start Newton method takes $N_\mathrm{N}$ iterations per LP iteration:%, where $N_\mathrm{N}$ is given by
\begin{equation}
	\begin{aligned}
	N_\mathrm{N} &= N_lN_{\mathrm{D}} +N_{\mathrm{Q}}\\
	=&
	\left\lceil \dfrac{
		\log\left(S^2L\kappa
		\right)
	}
	{\log\left(\frac{1}{\tau}\right)} \!\!\right\rceil
	\left\lceil
	\dfrac{ S^2L\kappa}
	{\zeta\tau}
	\right\rceil\! +\!\left\lceil\log_2\left(1\!-\!\log_2\left(\!S^2\!L\!\rho\!\right)\right)\right\rceil.
	\end{aligned}
	\end{equation}

Moreover, the greedy method used for the discretization of user pairing in Alg. \ref{alg:user_pairing} takes $N_g$ iterations to obtain the discrete assignment strategy: %, where $N_g$ is given by
\begin{equation}
	N_g = \sum_{m=1}^{2K-1}\left(2K-m\right)=K\left(2K-1\right).
	\end{equation}
%{\color{black}Please refer to Lines 10 and 11 in Alg. \ref{alg:user_pairing}.}
% \begin{equation}
% 	\begin{aligned}
% 	N =& N_{\mathrm{N}}N_{\mathrm{LP}} + N_g\\
% 	=&\left[\left\lceil \dfrac{
% 		\log\left(S^2L\kappa
% 		\right)
% 	}
% 	{\log\left(\frac{1}{\tau}\right)} \right\rceil
% 	\left\lceil
% 	\dfrac{ S^2L\kappa}
% 	{\zeta\tau}
% 	\right\rceil +\left\lceil\log_2\left(1-\log_2\left(S^2L\rho\right)\right)\right\rceil\right]\\
% 	&\hspace{25mm}\times \left\lceil \frac{\log \left( \frac{K\left( 2K-1\right)  }{\epsilon t^{\left( 0\right)  }} \right)  }{\log \left( \xi \right)  } \right\rceil  + K\left(2K-1\right)
% 	\end{aligned}
% \end{equation}
\subsubsection{Convergence of Overall Algorithm}

We can interpret Alg. \ref{alg:user_pairing} as a mapping $\tilde{Q}$ from $\mathbf{\hat{X}}$ to $\mathbf{X}$, i.e., $\tilde{Q}: \mathbb{R}^{2K\times 2K}\to \mathbb{R}^{2K\times 2K}$.
In this case, the problem solved by Alg.~\ref{alg:overall}, i.e., Problem~\eqref{equ:original_problem}, can be rewritten as
\begin{subequations}
	\label{equ:newton_pairing_matrix_analyze_convergence}
	\begin{align}
		\label{equ:objective_pairing_relaxation_analyze_convergence}
		&\mathop{\mathrm{min}}_{x_{m,n}, \hat{x}_{m,n}, p_m, p_n}
		-2^{\mathrm{tr}\left( \mathbf{R}_{s}^{T}\mathbf{X} \right)}
		 \!=\!
		 -2^{\mathrm{tr}\left( \mathbf{R}_{s}^{T}\tilde{Q}\left( \hat{\mathbf{X}} \right) \right)}
		 \\
		\label{equ:rm_pairing_relaxation_analyze_convergence}
		\text{s.t. } & {R_{m,m}} \geqslant {\hat{x}_{m,n}} R_m,\\
		\label{equ:rn_pairing_relaxation_analyze_convergence}
		&  {R_{n,n}} \geqslant {\hat{x}_{m,n}} R_n,\\
		\label{equ:sum_power_pairing_relaxation_analyze_convergence}
		&  \sum\nolimits_{m = 1}^{2K} {\sum\nolimits_{n = m + 1}^{2K} {{\hat{x}_{m,n}}\left( {{p_n} + {p_m}} \right)} }  \leqslant P,\\
		\label{equ:integer_pairing_relaxation_analyze_convergence}
		&  0 \leqslant {\hat{x}_{m,n}}, \leqslant 1,1 \leqslant m,n \leqslant 2K,\\
		\label{equ:xmm_0_pairing_relaxation_analyze_convergence}
		&   {\hat{x}_{m,n}} = 0, \\
		\label{equ:sum_m_pairing_relaxation_analyze_convergence}
		&   \sum\nolimits_{m = 1}^{2K} {\hat{x}_{m,n}}  = 1,1 \leqslant n \leqslant 2K,  \\
		\label{equ:last_constraints_analyze_convergence}
		& \sum\nolimits_{n = 1}^{2K} {\hat{x}_{m,n}}  = 1,1 \leqslant m \leqslant 2K,
	\end{align}
\end{subequations}
where $\mathbf{R}_s\in \mathbb{R}^{2K\times 2K}$ is the secrecy rate matrix whose $(m, n)$-th element is the secrecy rate of user $n$ against the potential eavesdropping by user $m$.

We can further interpret Alg. \ref{alg:overall} as a mapping $Q: \mathbb{R}^{2K} \times \mathbb{R}^{2K\times 2K} \times \mathbb{R}^{2K\times 2K} \to \mathbb{R}$, which maximizes the sum secrecy rate. Then,
\begin{equation}
	\label{equ:objective_overll_algorithm}
	Q\left(\bar{\mathbf{p}},  \mathbf{\hat{X}}, \mathbf{X}\right) = -2^{\mathrm{tr}\left( \mathbf{R}_{s}^{T}\tilde{Q}\left( \hat{\mathbf{X}} \right) \right)}+\delta_{\mathcal{F}}\left(\bar{\mathbf{p}},  \mathbf{\hat{X}}, \mathbf{X}\right)
	\end{equation}
where $\mathbf{\bar{p}}=\left\{p_n,\forall n\right\}$ is the vector of the transmit powers;
and the indicator function $\delta_{\mathcal{F}}\left(\bar{\mathbf{p}},  \mathbf{\hat{X}}, \mathbf{X}\right)$ is given by
\begin{equation}
	\delta_{\mathcal{F}}\left(\bar{\mathbf{p}},  \mathbf{\hat{X}}, \mathbf{X}\right) =
	\begin{cases}
		0, \hspace{3.8mm}\text{ if }
		\left(\bar{\mathbf{p}},  \mathbf{\hat{X}}, \mathbf{X}\right) \in \mathcal{F}\\
		+\infty, \hspace{0.5mm}\mathrm{otherwise}.
		\end{cases}
	\end{equation}
Here, $\mathcal{F}$ is the feasible domain of \eqref{equ:newton_pairing_matrix_analyze_convergence} defined by \eqref{equ:rm_pairing_relaxation_analyze_convergence}--\eqref{equ:last_constraints_analyze_convergence}.

As a result, Alg. \ref{alg:overall} can be interpreted to solve  \eqref{equ:newton_pairing_matrix_analyze_convergence} using the Block Coordinate Descent (BCD). In each iteration, the algorithm sequentially solves subproblems
% \begin{subequations}
% 	\label{equ:solve_problems_each_iter}
% 	\begin{equation}
% 		\label{equ:argmin_p}
		$\min_{\mathbf{\bar{p}}}Q\left(\bar{\mathbf{p}},  \mathbf{\hat{X}}, \mathbf{X}\right)$,
% 		\end{equation}
% 	\begin{equation}
% 		\label{equ:argmin_hatX}
		$\min_{\mathbf{\hat{X}}}Q\left(\bar{\mathbf{p}},  \mathbf{\hat{X}}, \mathbf{X}\right)$,
% 	\end{equation}
and
% \begin{equation}
% 	\label{equ:argmin_X}
	$\min_{\mathbf{X}}Q\left(\bar{\mathbf{p}},  \mathbf{\hat{X}}, \mathbf{X}\right)$.
% \end{equation}
% 	\end{subequations}
The convergence of each of the subproblems has been confirmed, since
Section~\ref{sec:power_allocation_opt} shows the semi-closed solution $\bar{\textbf{p}}$ and Section~\ref{sec:convergence_user_pairing} analyzes the convergence of
${\mathbf{\hat{X}}}$
%\eqref{equ:argmin_hatX}
and
%\eqref{equ:argmin_X}.
${\mathbf{X}}$. The overall convergence rate of Alg. \ref{alg:overall} is established in the following, starting with a few definitions.
\begin{definition}[Semi-algebraic set\cite{Lasserre2015,Sun2018,Bolte2014}]
%	\paragraph{Semi-algebraic set\cite{Lasserre2015,Sun2018,Bolte2014}}
	A subset of $\mathbb{R}^n$, denoted by $\mathcal{D}$, is called semi-algebraic if there exists finite $U, V\in \mathbb{N}$, such that
	\begin{equation}
		\label{equ:define_semi_algebraic}
		\mathcal{D}=\cup_{u=1}^U\cap_{v=1}^V\Big\{\mathbf{z}\in \mathbb{R}^{n}\left|p_{u, v}\left(\mathbf{z}\right)=0, q_{u, v}\left(\mathbf{z}\right)>0\Big\}\right.
	\end{equation}
	where $p_{u, v}\left(\mathbf{z}\right)$ and $q_{u, v}\left(\mathbf{z}\right)$ are real polynomial functions for $u =1, \cdots, U$ and $v=1, \cdots, V$.
	\end{definition}
\begin{definition}[Semi-algebraic function\cite{Lee2016,Attouch2010,Bao2014}]
%	\paragraph{Semi-algebraic function\cite{Lee2016,Attouch2010,Bao2014}}
	Let $\mathcal{D}\in \mathbb{R}^{n}$ and $\mathcal{E}\in \mathbb{R}^{m}$ be two semi-algebraic sets. A mapping $F:\mathcal{D}\to \mathcal{E}$ is semi-algebraic if its graph
	\begin{equation}
		\left\{\left(\mathbf{z}, \mathbf{o}\right)\in \mathcal{D}\times \mathcal{E}\left|\mathbf{o} = F\left(\mathbf{z}\right)\right\}\right.
		\subset \mathbb{R}^n\times \mathbb{R}^m
	\end{equation}
	is a semi-algebraic set.
\end{definition}

%We first propose the Theorem \ref{theorem:power_allocation_semialgebraic} as follows.
\begin{lemma}
	\label{theorem:power_allocation_semialgebraic}
	The function $Q(\cdot)$ is semi-algebraic.
\end{lemma}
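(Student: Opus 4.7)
The plan is to decompose $Q(\bar{\mathbf{p}},\hat{\mathbf{X}},\mathbf{X}) = -2^{\mathrm{tr}(\mathbf{R}_s^T\tilde{Q}(\hat{\mathbf{X}}))} + \delta_{\mathcal{F}}(\bar{\mathbf{p}},\hat{\mathbf{X}},\mathbf{X})$ into two summands and establish that each is semi-algebraic; since finite sums of semi-algebraic functions are semi-algebraic, this concludes the argument.

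The first step is to handle the greedy map $\tilde{Q}$. I would observe that Algorithm~\ref{alg:user_pairing}'s greedy stage produces an output $\mathbf{X}$ that depends only on the relative ordering of the entries of $\hat{\mathbf{X}}$. For each of the finitely many orderings $\sigma$ of the $K(2K-1)$ upper-triangular entries, the region $\mathcal{E}_\sigma \subset \mathbb{R}^{2K\times 2K}$ on which that ordering holds is cut out by linear inequalities of the form $\hat{x}_{(m_1,n_1)} \geq \hat{x}_{(m_2,n_2)}$, and is therefore semi-algebraic. On $\mathcal{E}_\sigma$ the value $\tilde{Q}(\hat{\mathbf{X}}) = \mathbf{X}_\sigma$ is a fixed binary matrix. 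Hence the graph of $\tilde{Q}$ is the finite union $\bigcup_\sigma \mathcal{E}_\sigma \times \{\mathbf{X}_\sigma\}$, which is semi-algebraic, so $\tilde{Q}$ is a semi-algebraic mapping.

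The second step is to show that $(\bar{\mathbf{p}},\hat{\mathbf{X}}) \mapsto -2^{\mathrm{tr}(\mathbf{R}_s^T \tilde{Q}(\hat{\mathbf{X}}))}$ is semi-algebraic on each piece $\mathcal{E}_\sigma$. With $\mathbf{X}_\sigma$ binary, one has
\[
2^{\mathrm{tr}(\mathbf{R}_s^T \mathbf{X}_\sigma)} = \prod_{(m,n):x_{m,n}=1} 2^{R^s_n} = \prod_{(m,n):x_{m,n}=1} \frac{1+\gamma_{n,n}}{1+\gamma_{n,m}},
\]
which, by the closed forms in \eqref{first_order}--\eqref{equ:Rsn_equation}, is a rational function of $\bar{\mathbf{p}}$ and hence semi-algebraic. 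Patching together the finitely many semi-algebraic pieces $\{\mathcal{E}_\sigma\}$ yields a semi-algebraic function globally.

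The third step is to verify that $\delta_{\mathcal{F}}$ is semi-algebraic by showing that the feasibility set $\mathcal{F}$ defined by \eqref{equ:rm_pairing_relaxation_analyze_convergence}--\eqref{equ:last_constraints_analyze_convergence} is semi-algebraic. The sum-power, box, equality, and pairing constraints \eqref{equ:sum_power_pairing_relaxation_analyze_convergence}--\eqref{equ:last_constraints_analyze_convergence} are polynomial (in fact linear) in $(\bar{\mathbf{p}},\hat{\mathbf{X}},\mathbf{X})$ and thus semi-algebraic. The rate constraints \eqref{equ:rm_pairing_relaxation_analyze_convergence}--\eqref{equ:rn_pairing_relaxation_analyze_convergence} become polynomial once one restricts to the binary values $\hat{x}_{m,n}\in\{0,1\}$ produced by $\tilde{Q}$ (the only feasible values consistent with the discrete pairing constraints): for $\hat{x}_{m,n}=1$ they read $(1+\gamma_{m,m})^2 \geq 1+p_{m,n}|h_m|^2/\sigma^2$ and symmetrically for $n$, while for $\hat{x}_{m,n}=0$ they are trivially satisfied. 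Thus $\mathcal{F}$ is a finite union of sets defined by polynomial (in)equalities, hence semi-algebraic, so $\delta_{\mathcal{F}}$ is a semi-algebraic function in the extended real sense used in KL/BCD analysis. Combining the two semi-algebraic summands completes the proof.

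The main obstacle is the third step: the rate constraints \eqref{equ:rm_pairing_relaxation_analyze_convergence}--\eqref{equ:rn_pairing_relaxation_analyze_convergence} involve $\log_2(\cdot)$ with $\hat{x}_{m,n}$ appearing multiplicatively, which after exponentiation becomes $\hat{x}_{m,n}$-in-the-exponent and is \emph{not} semi-algebraic for continuous $\hat{x}_{m,n}\in[0,1]$. The argument must therefore carefully exploit that the pairing constraints \eqref{equ:sum_m_pairing_relaxation_analyze_convergence}--\eqref{equ:last_constraints_analyze_convergence} together with the discreteness forced by $\tilde{Q}$ effectively restrict the active values of $\hat{x}_{m,n}$ to $\{0,1\}$, reducing the offending constraints to polynomial ones.
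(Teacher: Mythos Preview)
Your proposal follows essentially the same route as the paper's proof. Both arguments (i) show that $\tilde{Q}$ is semi-algebraic because its graph is a finite union of semi-algebraic pieces (you index by orderings $\sigma$ of the entries, the paper indexes by the resulting pairing sets $\mathbf{U}\in\mathcal{U}$; these are equivalent bookkeeping), (ii) exploit that on each piece the exponentiated secrecy rate $2^{\mathrm{tr}(\mathbf{R}_s^T\mathbf{X})}$ is a rational function of $\bar{\mathbf{p}}$, (iii) argue that $\mathcal{F}$ is semi-algebraic by reducing the rate constraints to polynomial inequalities once $\hat{x}_{m,n}\in\{0,1\}$, and (iv) conclude via finite sums of semi-algebraic functions.

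The one point worth flagging is your ``main obstacle'': you are right that \eqref{equ:rm_pairing_relaxation_analyze_convergence}--\eqref{equ:rn_pairing_relaxation_analyze_convergence} are not semi-algebraic for genuinely continuous $\hat{x}_{m,n}\in[0,1]$, and the paper handles this in exactly the way you suggest---by writing $\mathcal{F}_1$ and $\mathcal{F}_2$ as finite unions over $\mathbf{U}\in\mathcal{U}$ of sets on which $\hat{x}_{m,n}$ is pinned to $0$ or $1$ (see the paper's expressions for $\mathcal{F}_1$, $\mathcal{F}_2$ with the explicit conditions $\hat{x}_{m,n}=1$, $\hat{x}_{m',n'}=0$). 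Your caveat that this restriction is not forced by the doubly-stochastic constraints \eqref{equ:sum_m_pairing_relaxation_analyze_convergence}--\eqref{equ:last_constraints_analyze_convergence} alone is well taken; the paper does not address this point either and simply adopts the binary description of $\mathcal{F}$. So your proposal is at least as rigorous as the paper's own argument, and in fact more candid about where the delicate step lies.
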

\begin{proof}
Please refer to  Appendix \ref{appendix:proof_theorem1}.
\end{proof}

With the aid of Lemma \ref{theorem:power_allocation_semialgebraic},  the convergence rate of Alg.~\ref{alg:overall} can be established:
\begin{theorem}
	\label{theorem:convergence_rate}
	When $S$ and $L$ exists, there exist constants $C, \varrho, q_0 > 0$, satisfying the following inequality
	\begin{equation}
		\label{equ:convergence_rate}
		\eta \leqslant C q^{-\frac{1}{\varrho}}
		\end{equation}
	after $q > q_0$ iterations of the overall algorithm, where $\eta>0$ is the tolerance. In other words, $q \sim  \mathcal{O}\left(\dfrac{1}{\eta^{\varrho}}\right)$.
	\end{theorem}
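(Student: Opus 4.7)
The plan is to leverage the semi-algebraicity of $Q(\cdot)$ established in Lemma~\ref{theorem:power_allocation_semialgebraic} to activate the Kurdyka--Lojasiewicz (KL) inequality, and then combine the KL inequality with the standard sufficient-decrease and relative-error ingredients of Block Coordinate Descent (BCD) that Alg.~\ref{alg:overall} satisfies by construction. The convergence exponent $\varrho$ in~\eqref{equ:convergence_rate} will be read off from the KL exponent $\theta$ of $Q$ via the identity $\varrho=(2\theta-1)/(1-\theta)$, following the abstract Attouch--Bolte--Svaiter framework for descent methods on KL functions.

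First, since every proper lower semi-continuous semi-algebraic function is KL with some exponent $\theta\in[0,1)$ at every point of its domain, Lemma~\ref{theorem:power_allocation_semialgebraic} supplies, around every accumulation point $\mathbf{z}^\star=(\bar{\mathbf{p}}^\star,\hat{\mathbf{X}}^\star,\mathbf{X}^\star)$, a neighborhood $U$ and constants $c>0$, $\theta\in[0,1)$ such that
\begin{equation}
\bigl|Q(\mathbf{z})-Q(\mathbf{z}^\star)\bigr|^{\theta}\leqslant c\,\mathrm{dist}\bigl(0,\partial Q(\mathbf{z})\bigr),\quad \mathbf{z}\in U.
\end{equation}
Next, I would verify two ingredients for the iterates $\mathbf{z}_q=(\bar{\mathbf{p}}_q,\hat{\mathbf{X}}_q,\mathbf{X}_q)$ of Alg.~\ref{alg:overall}: (i) a sufficient decrease $Q(\mathbf{z}_{q-1})-Q(\mathbf{z}_q)\geqslant a\|\mathbf{z}_q-\mathbf{z}_{q-1}\|^{2}$, which holds because each block-update is either closed-form optimal (power allocation in Section~\ref{sec:power_allocation_opt}) or driven to KKT-tolerance by Alg.~\ref{alg:user_pairing}, so the three alternating minimisations are non-increasing; and (ii) a relative-error bound $\mathrm{dist}(0,\partial Q(\mathbf{z}_q))\leqslant b\|\mathbf{z}_q-\mathbf{z}_{q-1}\|$, obtained from the first-order optimality conditions of each subproblem and invoking the uniform bounds $S$ and $L$ postulated in the statement to control the induced Hessian-type terms. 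Compactness of $\mathcal{F}$ secures the boundedness of $\{\mathbf{z}_q\}$ and hence the existence of accumulation points at which the KL inequality can be applied.

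Combining (i)--(ii) with the KL inequality yields the one-step recurrence $\Delta_q^{2\theta}\leqslant \kappa(\Delta_{q-1}-\Delta_q)$, where $\Delta_q=Q(\mathbf{z}_q)-Q(\mathbf{z}^\star)\geqslant 0$. A standard Chung-type lemma then produces the dichotomy: for $\theta\in(0,\tfrac12]$ the error $\Delta_q$ decays $R$-linearly, while for $\theta\in(\tfrac12,1)$,
\begin{equation}
\Delta_q\leqslant C\,q^{-\frac{1-\theta}{2\theta-1}},\quad q>q_0.
\end{equation}
Identifying $\Delta_q$ with the tolerance $\eta$ and setting $\varrho=(2\theta-1)/(1-\theta)$ reproduces~\eqref{equ:convergence_rate} exactly, and inverting the bound gives $q=\mathcal{O}(1/\eta^{\varrho})$.

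The main obstacle I anticipate is ingredient (ii). The user-pairing block is produced by an inner barrier loop stopped at tolerance $\rho$ and is post-processed by a non-smooth greedy rounding $\tilde{Q}(\hat{\mathbf{X}})$, so bounding $\mathrm{dist}(0,\partial Q(\mathbf{z}_q))$ on the composite update is not automatic. I would absorb the inner-loop tolerance $\rho$ into the constant $b$ since it enters additively and is driven to zero as $t$ grows, and I would handle the rounding by noting that $\mathbf{X}_q$ takes values in a finite set, so along the BCD trajectory $\mathbf{X}_q$ stabilises after finitely many outer iterations, beyond which $\tilde{Q}$ is effectively piece-wise constant and the standard smooth KL analysis applies verbatim to the remaining $(\bar{\mathbf{p}}_q,\hat{\mathbf{X}}_q)$ pair, completing the proof.
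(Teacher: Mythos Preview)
Your proposal follows essentially the same high-level route as the paper: semi-algebraicity of $Q$ (Lemma~\ref{theorem:power_allocation_semialgebraic}) $\Rightarrow$ K\L{} property $\Rightarrow$ BCD-type sublinear rate. The paper's own proof is considerably terser than yours: rather than re-deriving the sufficient-decrease and relative-error ingredients and the recurrence $\Delta_q^{2\theta}\leqslant\kappa(\Delta_{q-1}-\Delta_q)$, it simply invokes boundedness of the iterates and then cites \cite{Xu2013,Khan2019} wholesale for the existence of $C,\varrho,q_0$ with $|o_q-o^*|\leqslant\tfrac{C}{2}q^{-1/\varrho}$. Your version is more self-contained and makes explicit the link $\varrho=(2\theta-1)/(1-\theta)$ to the K\L{} exponent, which the paper leaves implicit inside the citations.

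One small slip: you identify $\eta$ directly with $\Delta_q=Q(\mathbf{z}_q)-Q(\mathbf{z}^\star)$, but in Alg.~\ref{alg:overall} the tolerance $\eta$ controls the \emph{consecutive} gap $|o_q-o_{q-1}|$, not the optimality gap. The paper closes this by the triangle inequality $|o_q-o_{q-1}|\leqslant |o_q-o^*|+|o_{q-1}-o^*|\leqslant Cq^{-1/\varrho}$ and then inverts to get $q\geqslant(C/\eta)^{\varrho}$. You should insert the same step; otherwise the connection between your $\Delta_q$ bound and the stated inequality~\eqref{equ:convergence_rate} is not quite complete. Your anticipated difficulty with ingredient~(ii) is real and is precisely what the paper sidesteps by leaning on the cited BCD references rather than verifying the relative-error bound through the barrier/greedy composite explicitly.
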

\begin{proof}
	See Appendix \ref{appendix:proof_theorem2}.
	\end{proof}

%{\color{black} \textbf{COMMENT:} Please first confirm the proposed algorithm converges, e.g., the optimization objective is monotonic during the alternating optimization of power allocation and user pairing. Also, please define S and L here. I believe they are relevant to the confirmation of the convergence of Algo. 1.}

\subsection{Complexity Analysis}
\subsubsection{Power Allocation}
Since \eqref{equ:power_allocation_pn_result} and \eqref{equ:pmn_results}
%and \eqref{equ:solve_the_semi_closed_solution}
provide the closed-form power allocation strategy per user group, the complexity, denoted by $\mathcal{T}_{\mathrm{PA}}$, depends linearly on the number of user groups, $K$; i.e.,
% \begin{equation}
% 	\label{equ:power_allocation_complexity}
$	\mathcal{T}_{\mathrm{PA}} = \mathcal{O}\left(K\right)$.
% \end{equation}
%FLOPs to obtain allocated power of all users.

\subsubsection{User Pairing}
We analyze the computational complexity of solving (\ref{equ:newton_pairing_matrix}) using the logarithmic barrier method.
As discussed in Sec.~\ref{sec:convergence_user_pairing}, the desired accuracy $\epsilon$ is achieved after $N_{\mathrm{LP}}$ logarithmic barrier method iterations. In each of the iterations, the infeasible start Newton method is performed.

%Next, we analyze the complexity per iteration, i.e., for solving (\ref{equ:barrier_problem}).
The infeasible start Newton method also iterates. In each iteration of the damped Newton phase of the infeasible start Newton method,
computing (\ref{equ:primal_newton_step}) through the LU decomposition takes $\mathcal{T}_{\mathrm{LU}}= \frac{2}{3} \left[K\left(2K+1\right)\right]^3 + 2\left[K\left(2K+1\right)\right]$ floating operator points (FLOPs)~\cite{Golub2013}.
The complexity of the backtracking line search is
%\begin{equation}
	$\mathcal{T}_1 =\mathcal{O}\left(K^2\right)$ per step.
%\end{equation}
As a result, the backtracking line search in the damped Newton phase is
%\begin{equation}
$	\label{equ:ts}
	\mathcal{T}_s \!=\! N_l\mathcal{T}_1 \!=\!
	\mathcal{O}
\left(\log\left(S^2L\kappa
	\right)
 K^2/\log\left(\frac{1}{\tau}\right)\right)$.
%\end{equation}
Moreover, updating {\color{black}$\hat{\mathbf{x}}$}
%$\hcancel[red]{\mathbf{x}}$
and $\mathbf{w}$
in Line 6 of Alg. \ref{alg:user_pairing} incurs
%\begin{align}
%	\label{equ:tx}
	$\mathcal{T}_\mathbf{\hat{x}} = 2K(2K-1) $ and
%	\label{equ:tw}
	$\mathcal{T}_\mathbf{w} =  4K$
%	\end{align}
FLOPs~\cite{Zhang2017}. Hence, the complexity of the damped Newton phase is
\begin{equation}
\begin{aligned}
	\mathcal{T}_\mathrm{D} &= N_\mathrm{D}\left({\color{black}\mathcal{T}_{\mathrm{LU}}+}
	\mathcal{T}_s + \mathcal{T}_{\mathbf{\hat{x}}}+\mathcal{T}_{\mathbf{w}}
	\right) \\
	&=
	\mathcal{O}\left(
	\dfrac{
		S^2L\kappa
		\log\left(
		S^2L\kappa
		\right)
	}{
		\zeta\tau
		\log\left(\frac{1}{\tau}\right)
	}K^2 {\color{black}+\frac{16}{3}K^6}
	\right).
	\end{aligned}
	\end{equation}
Likewise, the complexity of the quadratically convergent phase is given by
\begin{equation}
\begin{aligned}
	\mathcal{T}_\mathrm{Q}&=N_\mathrm{Q}\left({\color{black}\mathcal{T}_{\mathrm{LU}}+}\mathcal{T}_{\mathbf{\hat{x}}} + \mathcal{T}_{\mathbf{w}}\right) \\
	&=
	\mathcal{O}
	\left(
	\log\left(
	1-\log\left(
	S^2L\rho
	\right)
	\right) K^2 {\color{black}+\frac{16}{3}K^6}
	\right)
	\end{aligned}
	\end{equation}
Thus, the complexity of the logarithmic barrier method is
\begin{equation}
	\begin{aligned}
		\label{equ:LP_complexity}
	\mathcal{T}_{\mathrm{LP}}=&N_{\mathrm{LP}}\left( \mathcal{T}_{\mathrm{D}}+\mathcal{T} _{\mathrm{Q}} \right) \\
=&
\mathcal{O} \left(
\frac{\log \left( \frac{K^2}{\epsilon t^{\left( 0 \right)}} \right)}{\log \left( \xi \right)}
\left( \frac{S^2L\kappa \log \left( S^2L\kappa \right)}{\zeta \tau \log \left( \frac{1}{\tau} \right)}+ \right. \right. \\
&\left.\log \left( 1-\log \left( S^2L\rho \right) \right) \Bigg)K^2{\color{black}+
\dfrac{32}{3}K^6}\right)
	\end{aligned}
	\end{equation}
%\subsubsection{Greedy-based approach}

The greedy method for discretization of the user pairing utilizes double loops to search for discrete user pairing strategies. Thus, the computational complexity is
$	\mathcal{T}_{g} = \mathcal{O}\left(K^2\right)$.
The overall complexity of user pairing in Alg. \ref{alg:user_pairing} is   $\mathcal{T}_{\mathrm{UP}} = \mathcal{T}_{\mathrm{LP}} + \mathcal{T}_g $.

\subsubsection{Overall Complexity}
According to Theorem \ref{theorem:convergence_rate}, % and \eqref{equ:number_of_iteration_of_overall_algorithm},
it takes $q=\mathcal{O}\left(\dfrac{1}{\eta^{\varrho}}\right)$ iterations for Alg. \ref{alg:overall} to converge.
Therefore, the overall complexity is
% \begin{align}
% %	\begin{aligned}
% 		\label{equ:complexity1}
$\mathcal{T} = \left(\mathcal{T}_{\mathrm{PA}} + \mathcal{T}_{\mathrm{UP}}\right) q$.
% 		&=
% 		\mathcal{O} \left( \frac{K^2\log \left( \frac{K^2}{\epsilon t^{\left( 0 \right)}} \right)}{\eta ^{\varrho}\log \left( \xi \right)} \right. \\
% 		&\hspace{10mm}
% 		\times \left[ \left. \left. \frac{S^2L\kappa \log \left( S^2L\kappa \right)}{\zeta \tau \log \left( \frac{1}{\tau} \right)}+\log \left( 1-\log \left( S^2L\rho \right) \right) \right] \right) \right. \nonumber
% %	\end{aligned}
% 	\end{align}

%{\color{black}Please provide a paragraph to discuss the complexity and convergence of the algorithm and, if possible, the optimality (e.g., in each of the two steps).}

\ifCLASSOPTIONcaptionsoff
  \newpage
\fi

\section{Simulation and Discussion}
Extensive simulations are provided to gauge the proposed scheme, where the users are distributed in a circular area with radius $l = 300$ m and the BS is located at the center of the area. The path loss exponent is set to 3. The bandwidth of each resource block is 0.5 MHz. The receiver noise power spectral density is $-174$ dBm/Hz.

To assess the merits of the proposed algorithm, we compare the algorithm with the following alternative approaches:
\begin{itemize}
    \item
\textit{Equal power allocation (EPA)}: This mechanism allocates the same transmit power for all users. The proposed user pairing is used.
%This allows us to evaluate the performance of the user pairing algorithm independently of the power allocation strategy.
By comparing our algorithm with the EPA, we can assess the benefit of the proposed power allocation strategy.
    \item
\textit{Random pairing (RP):} This mechanism randomly selects user pairs. We use the optimal power allocation strategy proposed in this paper to determine the transmit power for each user.
 \item
 \textit{Gale-Shapley algorithm-based alternative:} We take the Gale-Shapley algorithm~\cite{Zhou2017} to pair users, without considering their channel conditions. The optimal power allocation strategy proposed in this paper is used to determine the transmit power for each user.
 \item
 \textit{Simplex method-based alternative:} We take the Simplex method~\cite{Ficken2015} to solve the LP problem in user pairing. The optimal power allocation strategy developed in this paper is adopted to specify the transmit power for each user.
 \end{itemize}
By comparing the proposed algorithm with the RP-, Gale-Shapley algorithm- and Simplex method-based alternatives, we can evaluate the gain of the proposed user pairing algorithm.

\begin{figure}%[!htb]
    \centering
    \includegraphics[width=1\columnwidth]{./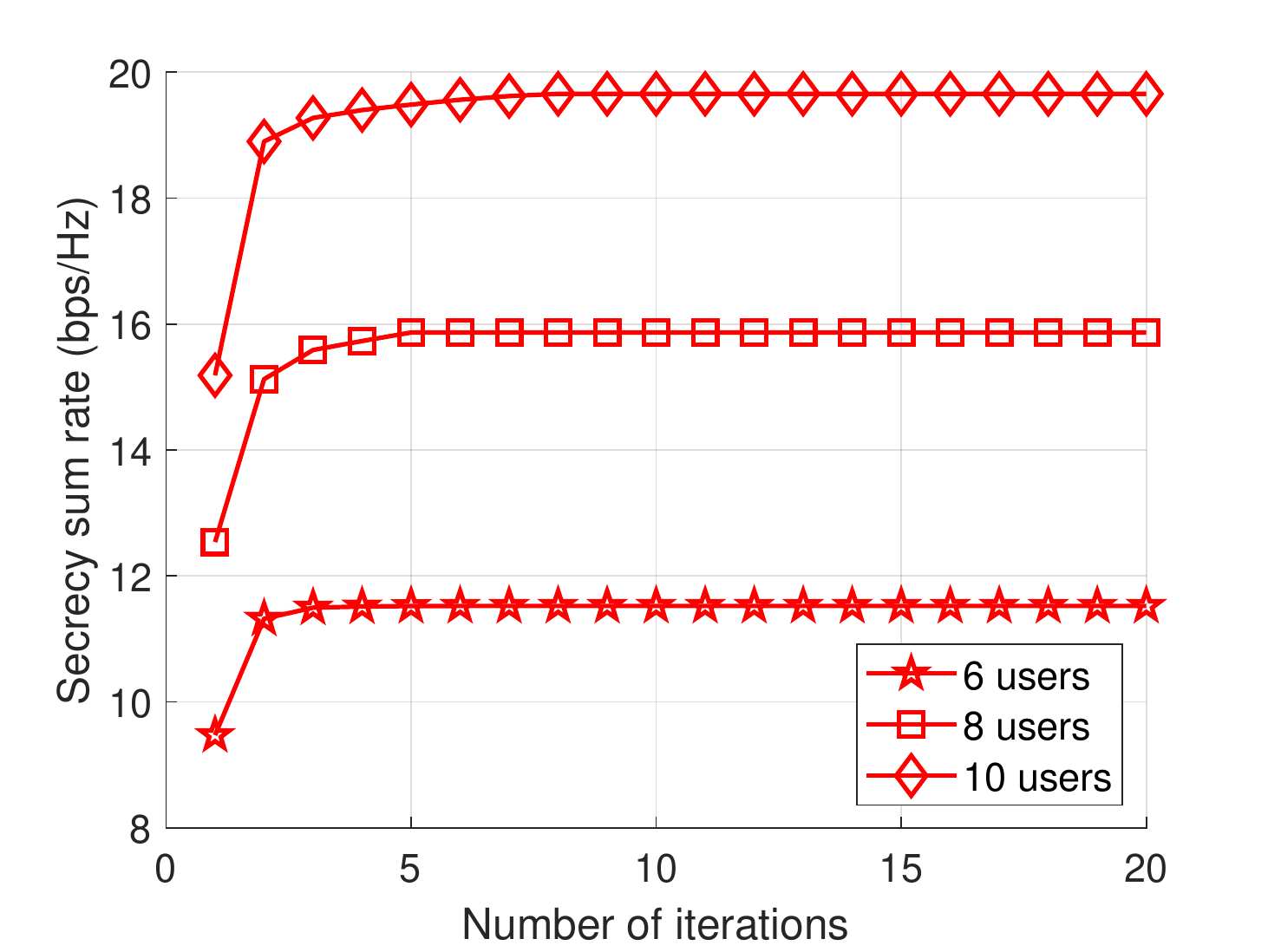}
    \vspace*{0pt}
    \caption{Sum secrecy rate against number of iterations with $2K$ = 6, 8, 10, and ${P} = 20$ dBm.}
    \label{fig:num_iters}
    \end {figure}

Fig. \ref{fig:num_iters} shows the evolution of the sum secrecy rate as the number of iterations increases. It is observed that the sum secrecy rate rises quickly and usually converges within 10 iterations.
%This indicates that the proposed algorithm is able to quickly find a good solution to the optimization problem and achieve a high sum secrecy rate.
It is also noticed that the user number has a non-negligible impact on the sum secrecy rate, especially when there are many users. The reason is that more users lead to stronger interference, hence penalizing the sum secrecy rate. Our algorithm mitigates the interference by properly allocating the power and pairing the users. These observations highlight the importance of the proposed algorithm in multiuser NOMA systems with many users.

\begin{figure}%[!htb]
	\centering
	\includegraphics[width=1\columnwidth]{./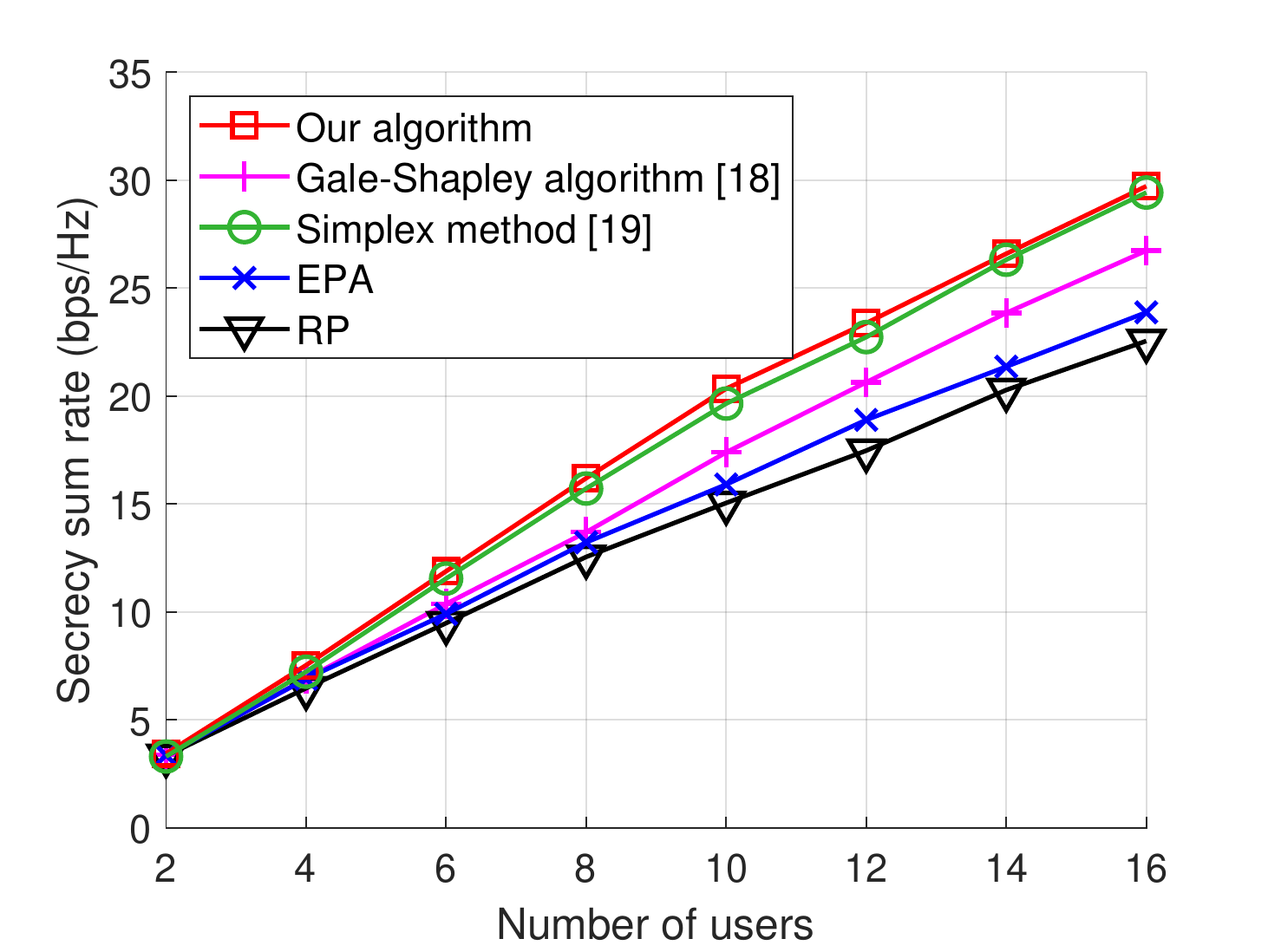}
	\vspace*{0pt}
	\caption{Sum secrecy rate against user number with ${P} = 20$ dBm.}
    \label{fig:number_user}
	\end {figure}

Fig. \ref{fig:number_user} plots the sum secrecy rate as users increase in the considered system. We notice that the sum secrecy rate increases with users under all five schemes. Our approach consistently outperforms the benchmark schemes, EPA and RP, indicating the algorithm can effectively allocate the powers and pair the users to promote the sum secrecy rate.
%This confirms the effectiveness of the proposed algorithm.
In order to verify the user pairing algorithm delivered in this paper, we compare our algorithm with the Gale-Shapley and Simplex methods. We observe that the new user pairing is more effective than the Gale-Shapley and Simplex algorithms. Although the gap of the sum secrecy rate is small between the proposed algorithm and the Simplex method, our algorithm is significantly more computationally efficient than the Simplex method, as will be discussed shortly.

\begin{figure}%[!htb]
	\centering
	\includegraphics[width=1\columnwidth]{./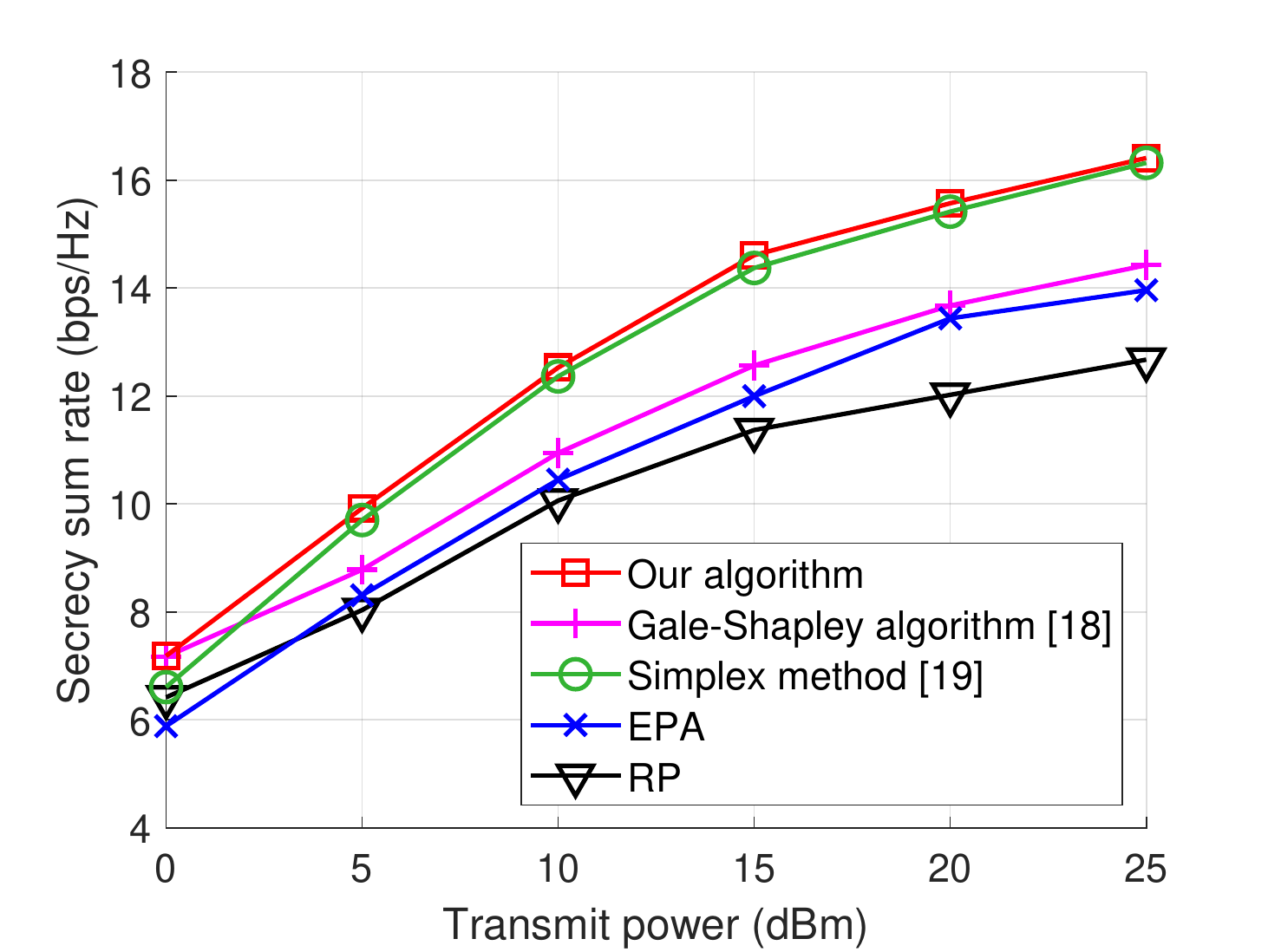}
	\vspace*{0pt}
	\caption{Sum secrecy rate against transmit power with $2K = 8$.}
    \label{fig:transmit_power}
	\end {figure}

\begin{figure}%[!htb]
	\centering
	\includegraphics[width=1\columnwidth]{./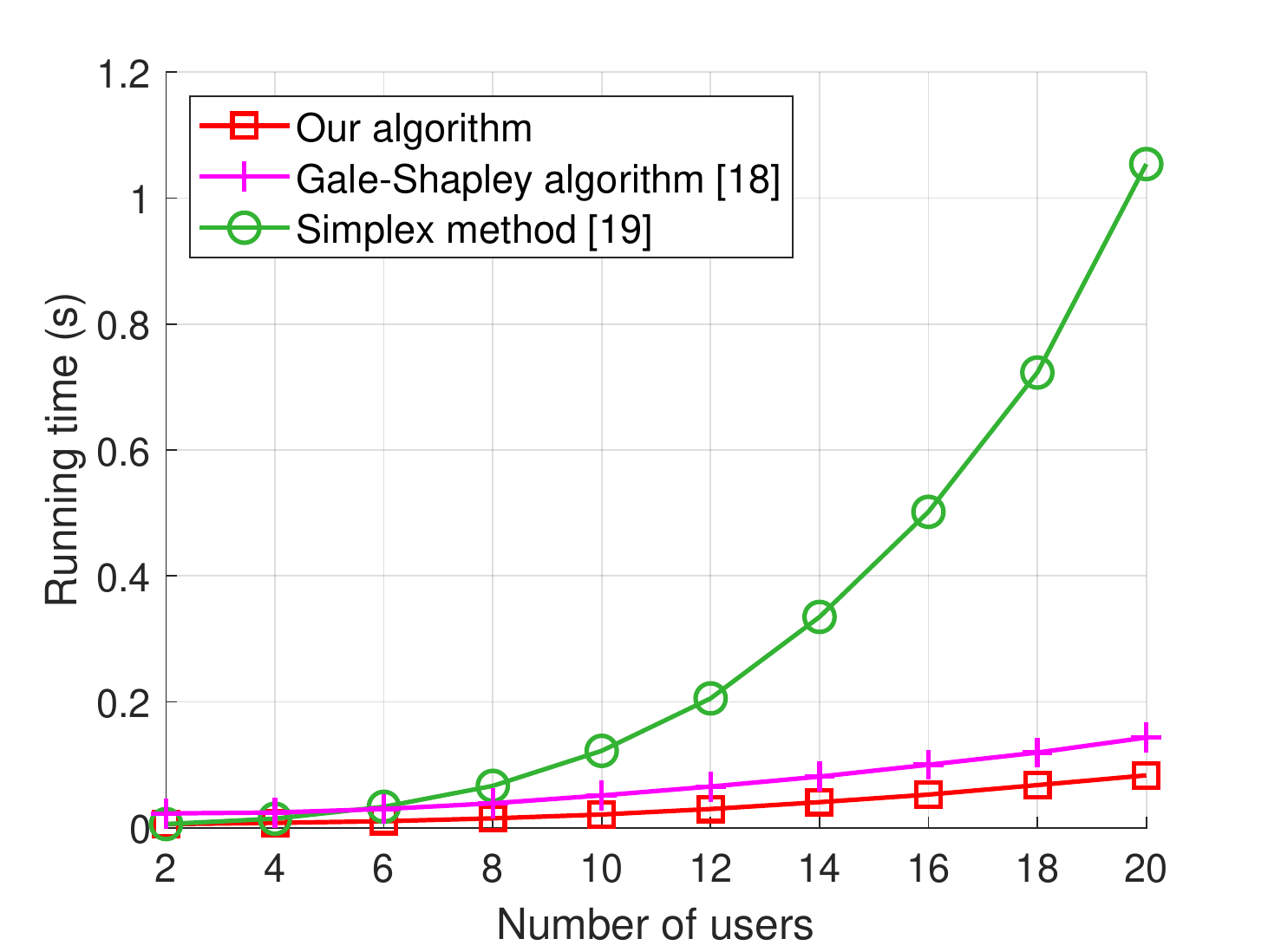}
	\vspace*{0pt}
	\caption{Running time against user number with ${P} = 20$ dBm.}
    \label{fig:running_time}
	\end {figure}

\begin{figure}[!htb]
	\centering
	\includegraphics[width=1\columnwidth]{./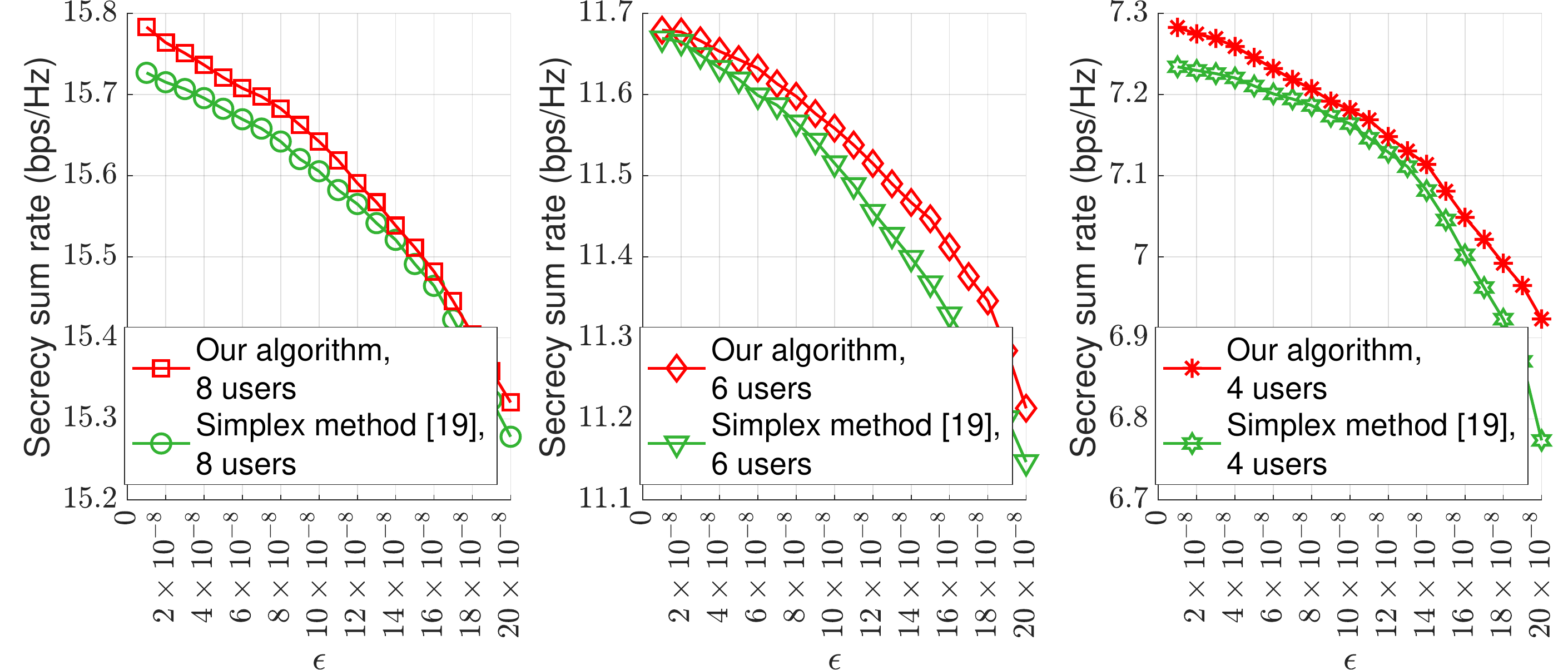}
	\vspace*{0pt}
	\caption{Sum secrecy rate  against $\epsilon$ with ${P} = 20 $ dBm.}
	\label{fig:epsilon}
\end {figure}

Fig. \ref{fig:transmit_power} shows that the sum secrecy rate increases with the transmit power of the BS. This is expected because more transmit power means that the users can transmit their messages at higher levels, which improves the secrecy performance. We also observe that our algorithm consistently outperforms the benchmarks, EPA, RP, Gale-Shapley and Simplex based algorithm. In other words, the new algorithm can effectively allocate the power and pair the users, leading to improved sum secrecy rate compared to alternative methods. In addition, we see that the NOMA-based EPA outperforms the NOMA-based RP, suggesting that NOMA systems are more sensitive to user pairing than they are to power allocation. All this confirms the effectiveness of our algorithm and the criticality of user pairing in NOMA.
%However, the sum secrecy rate also decreases with the number of users, which is due to the increased interference that arises when there are more users in the system.

% Fig. \ref{fig:running_time} shows the running time as a function of the numbers of users in the system. We can see that the running time increases with the number of users, which is expected because more users means that the system needs more time to process user pairing, which increases the running time. We also note that our proposed algorithm in this paper has the lowest complexity and the simplex based algorithm has the highest complexity when they achieve the same sum secrecy rate.
% Fig. \ref{fig:epsilon} illustrates the relationship between $\epsilon$ in Alg. \ref{alg:user_pairing} and sum secrecy rate with various number of users. We can find that the larger $\epsilon$ will decrease the sum secrecy rate, since $\epsilon$ is actually the tolerance of errors. However, smaller $\epsilon$ will suffer from larger computational complexity with more iterations. To achieve the balance of sum secrecy rate and running time, we choose $\epsilon=1\times 10^8$.

Fig. \ref{fig:running_time} demonstrates the relationship between the running time and user number in the considered system. As anticipated, the running time increases with users, since a larger number of users require more time for the system to decide on user pairing, extending considerably the running time. It is noticed that our algorithm has the lowest complexity and the Simplex method-based alternative has the highest, albeit they achieve similar sum secrecy rates.

% Fig. \ref{fig:epsilon} illustrates the impact of $\epsilon$ in Alg. \ref{alg:user_pairing} on the sum secrecy rate for different numbers of users. It is observed that a larger $\epsilon$ value leads to a lower sum secrecy rate since $\epsilon$ represents the tolerance for errors. On the other hand, a smaller $\epsilon$ value results in a higher computational complexity with more iterations. We choose $\epsilon=1\times 10^8$ to strike a balance between the sum secrecy rate and running time.

Last but not least, Fig.~\ref{fig:epsilon} presents the effect of the parameter $\epsilon$ on the sum secrecy rate for various numbers of users in Alg.~\ref{alg:user_pairing}. It is noticed that a larger $\epsilon$ value can cause a faster decrease in the sum secrecy rate, as $\epsilon$ represents the tolerance for errors in the proposed algorithm. However, a smaller $\epsilon$ value can result in a higher computational complexity, requiring more iterations to reach a satisfactory solution. Here, we set $\epsilon=1\times 10^8$ in order to balance the trade-off between sum secrecy rate and running time.

%{\color{black}Can you produce more results, e.g., under different distributions of the users?}

\section{Conclusion}
This paper studied a multiuser NOMA system with untrusted IoT devices, and drew up a joint power allocation and user pairing problem in an attempt to achieve the maximum sum secrecy rate of the system, subject to the data rate requirements of individual users and the transmit power of the BS. To effectively solve this MINLP problem, we decomposed the problem between two subproblems: power allocation with a closed-form solution, and user pairing solved using the logarithmic barrier method. Simulations showed that our algorithm offers superior secrecy performance to existing alternatives, indicating that the algorithm is effective in improving the secrecy of NOMA systems and that holistic consideration of both user pairing and power allocation is critical.

\bibliographystyle{IEEEtran}
\bibliography{reference}

\appendices
\section{Proof of Lemma \ref{theorem:power_allocation_semialgebraic}}
\label{appendix:proof_theorem1}
To prove Lemma \ref{theorem:power_allocation_semialgebraic}, we first %give the Lemma \ref{lemma:greedy_algorithm_semi_algebraic} as follows.
%\begin{lemma}
%	\label{lemma:greedy_algorithm_semi_algebraic}
%	The
prove the mapping $\tilde{Q}\left(\mathbf{\hat{X}}\right)=\mathbf{X}$ is semi-algebraic.
%	\end{lemma}
%\begin{proof}
To do this, we define $\mathbf{U}=\left\{\left(m_1, n_1\right), \cdots, \left(m_K, n_K\right)\right\}$ to collect the indices chosen by the greedy method in Alg. \ref{alg:user_pairing}, and
	\begin{subequations}
	\begin{equation}
		x_{m_1, n_1} \geqslant \cdots \geqslant x_{m_K, n_K};
		\end{equation}
	\begin{equation}
		x_{m_k, n_k} \geqslant x_{m', n'}, \forall \left(m_k, n_k\right), \left(m', n'\right)\notin \mathbf{U}.
		\end{equation}
	\end{subequations}
Let $\mathcal{U}$ be the set of all possible $\mathbf{U}$. The graph of $\tilde{Q}\left(\mathbf{\hat{X}}\right)=\mathbf{X}$ can be given by
\begin{align}
	\label{equ:graph_greedy_algorithm}
%	\begin{aligned}
	\mathrm{graph}&\left(\tilde{Q}\right) =
	\bigcup_{\mathbf{U}\in \mathcal{U}}{\left\{ \left. \left( \hat{\mathbf{X}},\mathbf{X} \right) \right| \right.}{x}_{m,n}=1,{x}_{m^\prime,n^\prime}=0, \nonumber\\
	& \forall \left( m,n \right) \in \mathbf{U},\left( m^\prime,n^\prime \right) \notin \mathbf{U}\Big\}\\
	=& \bigcup_{\mathbf{U}\in \mathcal{U}}{\left[ \left( \bigcap_{\left( m,n \right) \in \mathbf{U}}{\left\{ \left. \left( \hat{\mathbf{X}},\mathbf{X} \right) \right|x_{m,n}=1 \right\}} \right) \right.}\nonumber \\
	& \left. \bigcap{\left( \bigcap_{\left( m^\prime,n^\prime \right) \notin \mathbf{U}}{\left\{ \left. \left( \hat{\mathbf{X}},\mathbf{X} \right) \right|x_{m^\prime,n^\prime}=0 \right\}} \right)} \right]. \nonumber
%	\end{aligned}
	\end{align}
Since \eqref{equ:graph_greedy_algorithm} is semi-algebraic, $\tilde{Q}\left(\mathbf{\hat{X}}\right)=\mathbf{X}$ is semi-algebraic.
%	\end{proof}

Next, %Based on Lemma \ref{lemma:greedy_algorithm_semi_algebraic}, we give the Lemma \ref{lemma:objective_semi_algebraic} as follows.
%\begin{lemma}
%	\label{lemma:objective_semi_algebraic}
we prove that the objective function of
	\eqref{equ:newton_pairing_matrix_analyze_convergence} is semi-algebraic.
%	\end{lemma}
%\begin{proof}
	The graph of \eqref{equ:objective_pairing_relaxation_analyze_convergence} is
	\begin{align}
		\label{equ:graph_of_objective}
%		\begin{aligned}
		&\left\{ \left( \bar{\mathbf{p}}, \hat{\mathbf{X}}, \mathbf{X}, z \right) \left| -2^{\mathrm{tr}\left( \!\mathbf{R}_{s}^{T}\hat{\mathbf{X}} \right)}=z \right. \right\} \nonumber\\
		=&
		\bigcup_{\mathbf{U}\in \mathcal{U}}{\left[ \left\{ \left( \bar{\mathbf{p}},\hat{\mathbf{X}},\mathbf{X},z \right) \left| \prod_{\left( m,n \right) \in \mathbf{U}}{\check{p}_{m,n}}=\sigma ^2z\prod_{\left( m,n \right) \in \mathbf{U}}{\check{q}_{m,n}} \right. \right\} \right.} \nonumber\\
		&
		\bigcap{\left( \bigcap_{\left( m,n \right) \in \mathbf{U}}{\left\{ \left. \left( \bar{\mathbf{p}},\hat{\mathbf{X}},\mathbf{X},z \right) \right|x_{m,n}=1 \right\}} \right)}\\
		&
		\bigcap{\left( \bigcap_{\left( m^{\prime},n^{\prime} \right) \notin \mathbf{U}}{\left\{ \left. \left( \bar{\mathbf{p}},\hat{\mathbf{X}},\mathbf{X},z \right) \right|x_{m^{\prime},n^{\prime}}=0 \right\}} \right)} \nonumber\\
		&
		\left. \bigcap{\left( \bigcap_{\left( m,n \right) \in \mathbf{U}}{\left\{ \left. \left( \bar{\mathbf{p}},\hat{\mathbf{X}},\mathbf{X},z \right) \right|\check{p}_{m,n}-\sigma ^2\check{q}_{m,n}\geqslant 0 \right\}} \right)} \right], \nonumber
%		\end{aligned}
	\end{align}
where $
\check{p}_{m,n}
$
and $
\check{q}_{m,n}
$ are given by
\begin{subequations}
	\begin{equation}
		\check{p}_{m,n}=\left( \left| h_m \right|^2p_n+\sigma ^2 \right) \left( \left| h_n \right|^2p_n+\sigma^2 \right) ;
		\end{equation}
	\begin{equation}
		\check{q}_{m,n}=\left| h_m \right|^2\left( p_n+p_m \right) +\sigma^2.
		\end{equation}
	\end{subequations}
According to \cite[Corol. 4]{Neyman2003}, the composition of semi-algebraic function is semi-algebraic. %Therefore, based on Lemma \ref{lemma:greedy_algorithm_semi_algebraic},
Since the mapping $\tilde{Q}\left(\mathbf{\hat{X}}\right)=\mathbf{X}$ is semi-algebraic,
both \eqref{equ:graph_of_objective} and the objective function of \eqref{equ:newton_pairing_matrix_analyze_convergence} are semi-algebraic.
%	\end{proof}

%Then we propose the Lemma \ref{lemma:indicator_semi_algebraic} as follows
% \begin{lemma}
% 	\label{lemma:indicator_semi_algebraic}
Further, we prove that the indicator function $\delta_{\mathcal{F}}\left(\bar{\mathbf{p}}, \mathbf{\hat{X}}, \mathbf{X}\right)$ is semi-algebraic. Specifically,
% 	\end{lemma}
%\begin{proof}
	we prove that the feasible domain $\mathcal{F}$ is semi-algebraic as follows. We first show that the feasible domain $\mathcal{F}_1$ defined by \eqref{equ:rm_pairing_relaxation_analyze_convergence} is semi-algebraic. The set $\mathcal{F}_1$ is given by
%	\begin{figure*}[b]
%	\begin{strip}
	\begin{align}
		\label{equ:semi_algebraic_F1}
%		\begin{aligned}
		\mathcal{F}_1
		=&\bigcup_{\mathbf{U}\in \mathcal{U}}\left.\left\{\left(\mathbf{\bar{p}}, \mathbf{\hat{X}}, \mathbf{X}\right)\right|
		R_{m,m}\! \geqslant\! R_m, \!\hat{x}_{m,n}\!=\!1, \!\hat{x}_{m', n'}\!=\!0
		\right\} \nonumber\\
		=&\bigcup_{\mathbf{U}\in \mathcal{U}}\left[\!\left(\!\bigcap_{(m, n)\in \mathbf{U}} \right.\left\{\left(\mathbf{\bar{p}}, \mathbf{\hat{X}}, \mathbf{X}\right) \!\Big|
		\hat{p}_{m,n} \!\leqslant\! 0, \hat{x}_{m,n}\!=\!1
		\right\}\right) \nonumber\\
		&\bigcap
		\left.\!\left(\!\bigcap_{\left(m', n'\right)\notin \mathbf{U}}\left\{\left(\mathbf{\bar{p}}, \mathbf{\hat{X}}, \mathbf{X}\right) \! \Big|
		\hat{p}_{m',n'} \!\leqslant \! 0, \hat{x}_{m',n'}\!=\!0
		\right\}\!\right)\!\right] ,
%		\end{aligned}
		\end{align}
%	\end{strip}
%\end{figure*}
where $\hat{p}_{m,n}$ is given by
\begin{equation}
	\hat{p}_{m,n} = \left[\left(p_m+p_n\right)\!+\!\dfrac{\sigma^2}{\left|h_m\right|^2}\right]\!\left[p_n^2\!-\!\dfrac{\sigma^2}{\left|h_m\right|^2}\left(p_m\!-\!p_n\right)\right].
	\end{equation}
We can find that \eqref{equ:semi_algebraic_F1} is semi-algebraic.

Similarly, the feasible domain $\mathcal{F}_2$ defined by \eqref{equ:rn_pairing_relaxation_analyze_convergence} is
	\begin{align}
	\label{equ:semi_algebraic_F2}
%	\begin{aligned}
		\mathcal{F}_2
		=&\bigcup_{\mathbf{U}\in \mathcal{U}}\left.\left\{\left(\mathbf{\bar{p}}, \mathbf{\hat{X}}, \mathbf{X}\right)\right|
		R_{n,n} \! \geqslant\! R_n, \hat{x}_{m,\!n}\!=\!1,\! \hat{x}_{m',\! n'}\!=\!0
		\right\}\nonumber\\
		=&\bigcup_{\mathbf{U}\in\! \mathcal{U}}\left[\left(\bigcap_{(m, \!n)\in \!\mathbf{U}} \right.\left\{\left(\mathbf{\bar{p}},\! \mathbf{\hat{X}}, \!\mathbf{X}\right)\Big|
		\tilde{p}_{m,\!n} \!\leqslant \!0, \hat{x}_{m,\!n}\!=\!1
		\right\}\right)\nonumber\\
		&\!\bigcap
		\left.\!\left(\bigcap_{\left(m',\! n'\right)\notin\! \mathbf{U}}\!\left\{\!\left(\mathbf{\bar{p}},\! \mathbf{\hat{X}},\! \mathbf{X}\right)\Big|
		\tilde{p}_{m',\!n'}\!\leqslant\! 0, \hat{x}_{m',\!n'}\!=\!0\!
		\right\}\!\right)\!\right],
%	\end{aligned}
\end{align}
where $\tilde{p}_{m,n}$ is given by
\begin{equation}
	\tilde{p}_{m,n} = -\left|h_n\right|^2p_n^2+\left(p_m-p_n\right)\sigma^2.
	\end{equation}
We can find that \eqref{equ:semi_algebraic_F2} is also semi-algebraic.

The feasible domain $\mathcal{F}_3$ defined by \eqref{equ:sum_power_pairing_relaxation_analyze_convergence}--\eqref{equ:last_constraints_analyze_convergence} is also semi-algebraic, since these constraints are polynomial.

According to \cite[eq. I.2.9.1]{Shiota1997},
the intersection of semi-algebraic sets is also semi-algebraic. Therefore, the feasible domain of \eqref{equ:newton_pairing_matrix_analyze_convergence}, which is given by
\begin{equation}
	\mathcal{F} = \mathcal{F}_1\cap\mathcal{F}_2\cap \mathcal{F}_2,
\end{equation}
is also semi-algebraic. As a result, $\delta_{\mathcal{F}}\left(\bar{\mathbf{p}}, \mathbf{\hat{X}}, \mathbf{X}\right)$ is semi-algebraic, since the indicator function of a semi-algebraic set is semi-algebraic~\cite{Donatelli2019}.
%	\end{proof}

%According to Lemma \ref{lemma:objective_semi_algebraic} and Lemma \ref{lemma:indicator_semi_algebraic},
As discussed above, both the objective function of \eqref{equ:newton_pairing_matrix_analyze_convergence} and indicator function of feasible domain are semi-algebraic. Since the finite sum of semi-algebraic functions is also semi-algebraic \cite{Givens2012}, the mapping of the overall algorithm, i.e.,  \eqref{equ:objective_overll_algorithm} is semi-algebraic. This proof is complete.

\section{Proof of Theorem \ref{theorem:convergence_rate}}
\label{appendix:proof_theorem2}
According to Lemma \ref{theorem:power_allocation_semialgebraic}, the mapping of the overall algorithm is semi-algebraic. Thus, $Q$ has the K\L{} property \cite{Bolte2007}. Let $\mathbf{\bar{p}}^{(q)}$, $\mathbf{\hat{X}}^{(q)}$  and $\mathbf{X}^{(q)}$ denote the $\mathbf{\bar{p}}$, $\mathbf{\hat{X}}$ and $\mathbf{X}$ generated in the $q$-th iteration, respectively.
$\left\{\mathbf{\bar{p}}^{(q)}\right\}$ is bounded since $\left|h_n\right|$, $\left|h_m\right|$ and $\sigma$ are bounded. Similarly, both $L$ and $s$ are bounded. Therefore, subproblems
		$\min_{\mathbf{\bar{p}}}Q\left(\bar{\mathbf{p}},  \mathbf{\hat{X}}, \mathbf{X}\right)$,
		$\min_{\mathbf{\hat{X}}}Q\left(\bar{\mathbf{p}},  \mathbf{\hat{X}}, \mathbf{X}\right)$,
and
	$\min_{\mathbf{X}}Q\left(\bar{\mathbf{p}},  \mathbf{\hat{X}}, \mathbf{X}\right)$
converge in each iteration of the overall algorithm. On the other hand, it is easy to know that both $\mathbf{\hat{X}}^{(q)}$ and $\mathbf{X}^{(q)}$ are bounded.
When the algorithm sequentially solves
		$\min_{\mathbf{\bar{p}}}Q\left(\bar{\mathbf{p}},  \mathbf{\hat{X}}, \mathbf{X}\right)$,
		$\min_{\mathbf{\hat{X}}}Q\left(\bar{\mathbf{p}},  \mathbf{\hat{X}}, \mathbf{X}\right)$,
and
	$\min_{\mathbf{X}}Q\left(\bar{\mathbf{p}},  \mathbf{\hat{X}}, \mathbf{X}\right)$,
the sequence $\left(\mathbf{\bar{p}}^{(q)}, \mathbf{\hat{X}}^{(q)}, \mathbf{X}^{(q)}\right)$ generated by the algorithm is bounded. According to \cite{Xu2013}, as a bounded sequence generated by the function with K\L{} property, $\left(\mathbf{\bar{p}}^{(q)}, \mathbf{\hat{X}}^{(q)}, \mathbf{X}^{(q)}\right)$ converges to a stationary point of \eqref{equ:newton_pairing_matrix_analyze_convergence}.

Furthermore, let $o^*$ denote the optimum of the algorithm. When $S$ and $L$ exists, it was shown in  \cite{Xu2013,Khan2019} that there exist constant $C$, $\varrho$, and $q_0>0$, satisfying
\begin{equation}
	\left|o_q-o^*\right|  \leqslant  \dfrac{C}{2} q^{-\frac{1}{\varrho}},
	\end{equation}
after $q>q_0$ iterations.
Hence, we have
\begin{equation}
	\left|o_q-o_{q-1}\right| \leqslant \left|o_q-o^*\right|  + \left|o_{q-1}-o^*\right|
	\leqslant C q^{-\frac{1}{\varrho}}.
	\end{equation}
We can choose $\eta$ satisfying
\begin{equation}
	\left|o_q-o_{q-1}\right| \leqslant C q^{-\frac{1}{\varrho}}\leqslant \eta.
	\end{equation}
Hence, we have
\begin{equation}
	q\geqslant \left(\dfrac{C}{\eta}\right)^{\varrho}.
	\end{equation}
The number of iterations of the overall algorithm is given by
\begin{equation}
	\label{equ:number_of_iteration_of_overall_algorithm}
	q \sim  \mathcal{O}\left(\dfrac{1}{\eta^{\varrho}}\right),
	\end{equation}
which concludes this proof.
\end{document}